\documentclass[11pt]{article}
\usepackage{amssymb}
\usepackage{amsmath}
\usepackage{titlesec}

\setcounter{MaxMatrixCols}{10}

\usepackage{etex}
\usepackage{amsmath}
\usepackage{amssymb}
\usepackage{amsthm}
\usepackage{amsfonts}
\usepackage{epsfig}
\usepackage{latexsym}
\usepackage{url}
\usepackage{ifthen}
\usepackage{comment}
\usepackage{enumerate}
\usepackage{multirow}
\usepackage[english]{babel}
\usepackage{texnansi}
\usepackage{textcomp}
\usepackage{color}
\usepackage{calc}
\usepackage[normalem]{ulem}
\usepackage{array}
\usepackage{booktabs}
\usepackage[margin=10pt,font=small,labelfont=bf]{caption}
\usepackage{sectsty}
\usepackage[capitalise]{cleveref}
\usepackage{tikz}
\usepackage{soul}
\usepackage{setspace}
\usepackage[margin=1in]{geometry}
\usepackage{lipsum}

\provideboolean{lucidabr}

\ifthenelse{\boolean{lucidabr}}{%
  \usepackage{lucidabr}
}{%
  \usepackage{lmodern}
}

\allsectionsfont{\large\sffamily}
\makeatletter
\def\@seccntformat#1{\csname the#1\endcsname.\quad}
\makeatother

\usepackage{tikz}
\usepackage{pgfpages}
\usepackage{pgfplots}
\usepackage{pgfplotstable}
\usetikzlibrary{arrows,decorations,patterns,trees,matrix,calc,shapes}

\newcommand{\field}[1]{\mathbb{#1}}
\newcommand{\R}{\field{R}} %
\newcommand{\1}{\mathbf{1}} %
\newcommand{\E}{\mathsf{E}} %

\DeclareMathOperator*{\argmax}{\text{argmax}}

\renewcommand\max{\mathop{\text{max}}\limits}

\renewcommand\lim{\mathop{\text{lim}}\limits}
\renewcommand\log{\mathop{\text{log}}\limits}
\renewcommand\exp{\mathop{\text{exp}}\limits}
\renewcommand\sup{\mathop{\text{sup}}\limits}
\renewcommand\inf{\mathop{\text{inf}}\limits}

\tikzstyle{every picture} += [>=stealth]

\tikzset{axis/.style={semithick, line join=miter}}

\pgfdeclarelayer{background}
\pgfdeclarelayer{foreground}
\pgfsetlayers{background,main,foreground}

\pgfplotstableset{%
  font=\small,
  every head row/.style={before row=\toprule[1pt], after row=\midrule},
  every last row/.style={after row=\bottomrule[1pt]}}

\def\coursename#1{%
  \def\ctemp{#1}%
  \ifx\ctemp\@empty
  \def\insertcoursename{}
  \else
  \def\insertcoursename{\ignorespaces#1}
  \fi
}
\coursename{}

\makeatletter

\makeatother

\setlength{\parindent}{0pt}
\setlength{\parskip}{10pt}

\usepackage{natbib}
 \bibpunct[, ]{(}{)}{,}{a}{}{,}%

\newcommand{\q}[0]{\alpha}

\newcommand{\eps}{\varepsilon } %

\newcommand{\twopartdef}[4]{
 \begin{cases} #1 &\mbox{if } #2\\
#3 &\mbox{if } #4 \end{cases}}
\newcommand{\threepartdef}[6]{
 \begin{cases} #1 &\mbox{if } #2\\
#3 &\mbox{if } #4\\
#5 &\mbox{if } #6 \end{cases}}

\newcommand{\unp}{0}
\newcommand{\ovp}{\overline{\theta}}

\newtheorem{theorem}{\bfseries\sffamily Theorem}

\newtheorem{corollary}{\bfseries\sffamily Corollary}

\newtheorem{proposition}{\bfseries\sffamily Proposition}

\newcommand{\fcadd}[1]{\noindent{\textcolor{black}{#1}}}

\newcommand{\prof}{\pi}
\providecommand{\keywords}[1]{\textit{Keywords:} #1}
\providecommand{\JEL}[1]{\textit{JEL Classification:} #1}
\titleformat*{\section}{\sffamily\Large\bfseries}
\titleformat*{\subsection}{\sffamily\large\bfseries}
\titleformat*{\subsubsection}{\sffamily\large\bfseries}
\setcounter{tocdepth}{3}

\input{tcilatex}

\begin{document}

\author{Dirk Bergemann\thanks{
Department of Economics, Yale University, New Haven, U.S.A.,
dirk.bergemann@yale.edu} \and Francisco Castro\thanks{
Anderson School of Management, UCLA, Los Angeles, U.S.A.,
francisco.castro@anderson.ucla.edu} \and Gabriel Weintraub\thanks{
Graduate School of Business, Stanford University, Stanford, U.S.A.,
gweintra@stanford.edu} }
\title{{\LARGE \bfseries\sffamily Third-Degree Price Discrimination Versus
Uniform Pricing}\thanks{%
The first author would like to acknowledge financial support through NSF
grant SES 1459899. We thank John Vickers for helpful
conversations.}}
\date{\ \today }
\maketitle

\abstract{
We compare the profit of the optimal third-degree price
discrimination policy against a uniform pricing policy. A uniform pricing policy
offers the same price to all segments of the market. Our main result
establishes that for a broad class of third-degree price discrimination
problems with concave  profit functions  (in the price space) and common support, a uniform price
is guaranteed to achieve one half of the optimal monopoly profits. This
 profit bound holds for \fcadd{any} number of segments and prices that
the seller \fcadd{might use under} third-degree price
discrimination. We establish that these conditions are tight and
that weakening either common support or concavity can lead to arbitrarily poor
profit comparisons even for regular or monotone hazard rate distributions.
}

\medskip
\keywords{Third-Degree Price
Discrimination, Uniform Price, Approximation, Concave Profit Function,
Market Segmentation.}

\medskip \JEL{C72, D82, D83.} \newpage

\section{Introduction}

\subsection{Motivation and Results}

An important use of information about demand is to engage in price
discrimination. A large body of literature, starting with the classic work
of \citet{pigo20}, examines what happens to prices, quantities, and various
measures of welfare as the market is segmented. A seller engages in
third-degree price discrimination if he uses information about consumer
characteristics to offer different prices to different segments of the
market. As every segment is offered a different price, there is scope for
the producer to extract more surplus from the consumer. Additional information about consumer demand increases flexibility
 for market segmentation.\footnote{\citet{pigo20} suggested a
classification of different forms of price discrimination. First-degree (or
perfect) price discrimination is given when the monopolist charges each unit with
a price that is equal to the consumer's maximum willingness to pay for that
unit. Second-degree price discrimination arises when the price depends on
the quantity (or quality) purchased. Third-degree price discrimination
occurs when different market segments are offered different prices, e.g. due
to temporal or geographical differentiation.}

Our main contribution is to compare the profit performance of third-degree
price discrimination against a uniform pricing policy. A uniform pricing
policy offers the same price to all segments of the market. Theorem \ref%
{thm1} establishes that for a broad class of third-degree price
discrimination problems with concave profit functions (in the price space)  and common support, a
uniform price is guaranteed to achieve one half of the optimal monopoly
profits. {The profit bound  in Theorem \ref{thm1} is independent of the number of segments and holds for any arbitrarily large number of segments.
}\label{rev-seg-intro}%
 Interestingly, the performance guarantee of \cref{thm1} can be established
 with different 
choices regarding the uniform price, each of which uses different sources of information  regarding the market demand.

We investigate the limits of this result by weakening the assumptions of
concavity and common support. First, Proposition \ref{prop:no-com-sup} shows
the significance of the common support assumption by studying a setting with
concave profit functions that have finite but different supports. We display
a sequence of segments under which the profit ratio of uniform price to
third-degree price discrimination goes to zero.\footnote{%
In the related literature section we discuss the relation between this
result and \citet {malueg2006bounding}.} Second, Proposition \ref{prop:triangle} and Proposition \ref
{prop:fail-regu} note that the approximation result does not hold for the
commonly studied class of regular distributions. 
More specifically, we can
weaken the concavity of the profit function to merely assume regular
environments while maintaining common support. In other words, we assume that the profit function is only
concave in the space of quantiles, rather than prices. {Proposition \ref{prop:triangle} establishes that for some regular distributions, uniform
pricing can perform arbitrarily poorly compared to optimal third-degree
price discrimination. Proposition \ref%
{prop:fail-regu} establishes that when we consider the even more selective sub-family of monotone hazard rate distributions, the poor performance of uniform pricing still holds. That is, in the third-degree price discrimination setting we need more stringent conditions, such as concavity,
beyond the most commonly used notion of regularity to attain good approximations.
}\label{rev:prop4-prop5}The importance of the aforementioned results is that they establish that
  if any of these assumptions is dropped (not necessarily at the same time), then the profit ratio of uniform price to
third-degree price discrimination can be small. \label{revision-intro-1}

As an application of our main result, we consider the dynamic mechanism design problem of sequential screening with ex-post participation constraints investigated by \citet{krst15} and \citet{bergemann2017scope}. In \cref{revision-seq-screening}, we establish the connection between the aforementioned problem and our setting and show that \cref{thm1} implies that the static mechanism in sequential screening can lead to a half approximation of the optimal dynamic mechanism.

\subsection{Related Literature}

\label{lit}

Our work builds on the classic literature on third-degree price
discrimination, see e.g., \citet{pigo20}, \citet{robi33}, and  \citet{schm81}%
. In more recent work, \citet{agcv10} identify conditions on the shape of the
demand function for price discrimination to increase welfare and output
compared to the non-discriminating price case.

\citet{bebm15} analyze the limits of price discrimination. They show that
the segmentation and pricing induced by the additional information can
achieve every combination of consumer and producer surplus such that:\ $%
\left( i\right) $ consumer surplus is nonnegative, $\left( ii\right) $
producer surplus is at least as high as profits under the uniform monopoly
price, and $(iii)$ total surplus does not exceed the surplus generated by
the efficient trade. Building on this work, \citet{cummings2020algorithmic} 
provide approximate guarantees to segment the market when an intermediary
has only partial information about the buyer's values.

In contrast, in this paper we analyze the \emph{profit} implications of
uniform pricing versus third-degree price discrimination. We are
particularly interested in understanding  the approximation
guarantees that a uniform price can deliver. Closest to our work is a paper
by \citet{malueg2006bounding} which examines the profit effects of
third-price discrimination compared to uniform pricing. They consider a
setting similar to ours in which the monopolist experiences a total cost
function for serving different segments. They show that when the demand is
continuous and the total cost is superadditive, the ratio of third-degree
price discrimination profit to uniform price profit is bounded above by the
number of segments that are served under price
discrimination. They provide an example under which this bound is tight and the bound for the ratio equals the total number of segments.
In contrast, in the present paper we identify a key condition
which leads to a bound that is not contingent on the number of segments in
the market.

{Their Proposition 2, adjusted to our setting, implies our \cref{prop:no-com-sup}. 
They also provide an example that attains the worst-case performance for distributions with different support (and linear demand).
While their proof is inductive, we provide an alternative and constructive argument. }\label{rev:malueg-snyder}

{Since the seminal work of \citet{myer81}, there has been great 
interest in simple and approximate mechanisms
design. In general, characterizing optimal selling mechanisms is a difficult
task, see e.g., \citet{daskalakis2014complexity} and %
\citet{papadimitriou2016complexity}. Hence, deriving simple-practical
mechanisms is of utmost importance.  \citet{chawla2007algorithmic}, \citet{hartline2009simple}, \citet{alaei2019optimal}, and \citet{jin2018tight}, among others, have made remarkable progress toward establishing  performance guarantees of simple mechanisms in a variety of settings. %
One of the key observations is that regular environments---non-decreasing virtual value---consistently lead to good bounds. In particular, triangular instances---instances for which the revenue functions in the \textit{quantile space} are triangle-shaped (see e.g., \citet{alaei2019optimal})---are the worst-case in terms of performance guarantees. In contrast to this stream of literature,  we consider the problem faced by a monopolist
selling to %
many distinct segments of the market. Nevertheless, in line with these earlier papers, we aim to obtain
performance guarantees when comparing the best possible pricing for the
monopolist (third-degree price discrimination) to the simple pricing
scheme (uniform pricing). In terms of techniques, as we discuss in the
next paragraph, we resort to related
triangular instances  as worst-case performance settings, but we also establish arbitrarily poor performance in the regular case.}
\label{revision-lit-1}

Our work also shares some similarities with the approach taken by %
\citet{dhangwatnotai2015revenue} (see also \citet{hartline2013mechanism} chapter 5) to
study the prior-independent single sample mechanism. In this mechanism,
bidders are allocated an object according to the VCG mechanism with reserves randomly
computed from other bidders' bids. 
In a setting with  $n$ bidders, the authors establish that 
 random pricing achieves half of the
optimal profit. %
The authors 
 expand this result to more
complex settings and formalize it by using an intuitive geometric
approach similar to the one we present in Section \ref{sec:concave}. In
particular, under the assumption of regular distributions, the profit
function in the quantile space turns out to be concave. Consequentially, the
profit function is bounded below by a triangle with height equal to the
maximum profit. This implies that the expected profit from uniformly
selecting a quantile is bounded below by the area of the triangle or,
equivalently, by half the maximum profit. One of the proofs we give for our result in Theorem 
\ref{thm1} %
 uses this observation to prove a different
result, namely, that uniform pricing can deliver at least half the value of optimal third-degree price discrimination. However, we must assume that the
profit functions are \emph{concave in the price space}, otherwise our half
approximation result might not hold. Indeed, in Proposition \ref%
{prop:fail-regu} we show that, in contrast to the aforementioned papers in our third-degree price discrimination setting, for regular distributions, simple pricing
leads to arbitrarily poor guarantees.\label{revision-lit-2}

\section{Model}

\label{sec:mod}

We consider a monopolist selling to $K$ different customer segments. Each
segment $k$ is in proportion $\alpha _{k}$ in the market where $\alpha
_{k}\geq 0$ for all $k\in \{1,\dots ,K\}$ and $\sum_{k=1}^{K}\alpha _{k}=1$.
If the monopolist offers price $p_{k}$ to segment $k$ and has constant marginal cost $c\geq 0$, then the monopolist
receives an associated profit of: 
\begin{equation*}
\prof_{k}(p_{k})\triangleq (p_{k}-c)\cdot (1-F_{k}(p_{k})) \ ,
\end{equation*}%
where $F_{k}(\cdot )$ is the cumulative distribution function of a distribution
with support in $\Theta _{k}\subset \R_{+}$.We assume that $c\leq \sup\{\theta:\theta\in \Theta_k\}, \forall k$, i.e.,  the efficient allocation would generate sales with positive probability in every segment.
  The total profit the monopolist
receives from the different segments by pricing according to a vector of prices $\boldsymbol{p}=(p_{1},\dots ,p_{K})$ is 
\begin{equation*}
\Pi (\boldsymbol{p})=\sum_{k=1}^{K}\alpha _{k}\prof_{k}(p_{k}).
\end{equation*}%
The monopolist wishes to choose $\boldsymbol{p}$ to maximize $\Pi (%
\boldsymbol{p})$. 

The monopolist can choose prices in different manners. First, for each segment $%
k$, the monopolist can set the price $p^\star_k$ where 
\begin{equation*}
p^\star_k\in \argmax_{p\in \Theta_k} \prof_k(p).
\end{equation*}
Let $\boldsymbol{p}^\star$ be the vector of prices $\{p^\star_k\}_{k=1}^K$;
we refer to these prices as the \textit{per-segment optimal prices}. Note
that $\boldsymbol{p}^\star$ corresponds to the case of third-degree price
discrimination. We use $\Pi^\star$ to denote $\Pi(\boldsymbol{p}^\star)$. 
Another way of setting prices  is to simply use a
 uniform price for all segments. In this case, the monopolist  solves
the problem 
\begin{equation}\label{rev:uniform-problem}
\Pi ^{U}\triangleq \max_{p\in \cup _{k=1}^{K}\Theta
_{k}}\sum_{k=1}^{K}\alpha _{k}\prof_{k}(p).
\end{equation}
We use $p_u$ to denote the optimal price in the above problem, which we
refer to as the \textit{optimal uniform price}. With some abuse of notation
we sometimes use $\Pi ^{U}(p)$ to denote $\Pi (\boldsymbol{p})$ when all the
components of $\boldsymbol{p}$ are equal to $p$. 
\fcadd{
We call the ratio between  the best third-degree price discrimination scheme
and  the best uniform price scheme the \textit{profit ratio}:
\begin{equation}\label{eq:profit-ratio-rev}
\frac{\Pi ^{U}}{\Pi ^\star}. 
\end{equation}
We use  $\Pi^\star(\alpha,F)$ and $\Pi^U(\alpha,F)$ to make explicit the dependence of the
monopolist profit on the model parameters ($\alpha, F$): the segmentation, $\alpha=\{\alpha_k\}_{k=1}^K$, and the demand, $F=\{F_k\}_{k=1}^K$.
Our main objective in this
paper is to study how this ratio performs across a wide rage of parameter environments:
\begin{equation}
\inf_{\alpha ,F} \frac{\Pi ^{U}(\alpha ,F)}{\Pi ^\star(\alpha ,F)}. 
\tag{$\mathcal{P}$}  \label{eq:min-ratio}
\end{equation} 
}

\section{Concave Profit Functions}
\label{sec:concave}

In this section, we assume that the profit functions, $\prof_{k}(\cdot )$, are
concave. We will further assume that the segments' supports, $\Theta _{k}$,
are identical across segments, that is,  $\Theta _{k}=\Theta $ for all $k$ where $\Theta $
is a closed and bounded interval $[\unp,\ovp]$ of $\R_{+}$. In later
sections, we analyze \eqref{eq:min-ratio} under relaxed assumptions.

We now establish that a particularly simple uniform price, formed as the midpoint between the marginal cost $c$ and 
the largest possible value $\ovp$,
\begin{equation}\label{eq:simple-uniform-price}
p_s  = \frac{c+\ovp}{2},
\end{equation}
 can achieve half of the monopolist's profit. 

\begin{theorem}[\textbf{Uniform price is a half approximation}]\label{thm1}{\ \\}
Suppose that the profit functions $\prof_k(p)$ are concave 
and defined in the same bounded interval $\Theta\subset \R_+$ for all $k\in\{1,\dots, K\}$. 
 Then the uniform
price $p_s$ delivers a 1/2-approximation for the monopolist's profits.
\end{theorem}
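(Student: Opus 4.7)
The plan is to reduce the inequality $\Pi^U \geq \tfrac{1}{2}\Pi^\star$ to the per-segment bound $\prof_k(p_s) \geq \tfrac{1}{2}\prof_k(p^\star_k)$, which I will then establish by a concavity-based ``tent'' argument. The reduction is immediate: taking an $\alpha_k$-weighted sum of the per-segment inequalities gives $\sum_k \alpha_k \prof_k(p_s) \geq \tfrac{1}{2}\Pi^\star$, and since $\Pi^U = \max_{p\in\Theta}\sum_k \alpha_k \prof_k(p) \geq \sum_k \alpha_k \prof_k(p_s)$, the theorem follows.

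The core geometric observation is this. The profit function $\prof_k$ vanishes at both endpoints of $[c,\ovp]$: $\prof_k(c)=0$ trivially, and $\prof_k(\ovp) = (\ovp-c)(1-F_k(\ovp)) = 0$, because the common support $\Theta = [\unp,\ovp]$ forces $F_k(\ovp)=1$. Since $\prof_k(p)\leq 0$ for $p\leq c$, any maximizer satisfies $p^\star_k \in [c,\ovp]$. Concavity of $\prof_k$ on this interval, together with the two zero endpoint values, implies that its graph lies pointwise above the ``tent'' $T_k$ with vertices $(c,0)$, $(p^\star_k, \prof_k(p^\star_k))$, and $(\ovp,0)$, since on each of the two sub-intervals $[c,p^\star_k]$ and $[p^\star_k,\ovp]$ the relevant chord is dominated by $\prof_k$.

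It remains to evaluate $T_k$ at $p_s = (c+\ovp)/2$, which splits into two symmetric cases. If $p_s \leq p^\star_k$, similar triangles on $[c, p^\star_k]$ give
\[
T_k(p_s) \;=\; \prof_k(p^\star_k)\,\frac{p_s-c}{p^\star_k - c} \;=\; \prof_k(p^\star_k)\,\frac{(\ovp-c)/2}{p^\star_k - c} \;\geq\; \frac{\prof_k(p^\star_k)}{2},
\]
using $p^\star_k - c \leq \ovp - c$. The mirror computation for $p_s \geq p^\star_k$, exploiting $\ovp - p^\star_k \leq \ovp - c$, yields the same bound. Hence $\prof_k(p_s) \geq T_k(p_s) \geq \tfrac{1}{2}\prof_k(p^\star_k)$ in both cases, closing the argument.

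I do not expect a substantial obstacle: the tent-domination step is the standard consequence of concavity for a nonnegative function vanishing at both endpoints of an interval, and the midpoint calculation is mechanical. The conceptual point worth emphasizing is that the specific choice $p_s = (c+\ovp)/2$ is exactly the value that guarantees at least half the peak height regardless of where $p^\star_k$ sits in $[c,\ovp]$; this is precisely what makes the \emph{same} uniform price work simultaneously across all segments, which is what the theorem requires.
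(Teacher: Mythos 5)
Your proposal is correct and follows essentially the same route as the paper's own proof of \cref{thm1}: the ``tent'' domination you describe is exactly the paper's pair of chord-slope inequalities derived from concavity (using $\prof_k(c)=0$ and $\prof_k(\ovp)\ge 0$), and the evaluation at the midpoint $p_s=(c+\ovp)/2$ followed by the $\alpha_k$-weighted summation is identical. Your explicit remark that $\prof_k$ vanishes (or is nonnegative) at both endpoints, which justifies the second chord inequality, is a welcome bit of added care but not a departure from the paper's argument.
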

\begin{proof}
We show that by setting $p_s = (c+\ovp)/2$ the monopolist can obtain at least half the profit of third-degree price discrimination. Indeed, the concavity of $\prof_k(\cdot)$ ensures that: %
\begin{equation}
\frac{\prof_k(p_s)}{p_s-c}\geq \frac{\prof_k(p^\star_k)}{p^\star_k - c}\:\: \text{ if } p_s\leq p^\star_k,\quad \text{and}
\quad
\frac{\prof_k(p_s)}{\ovp-p_s}\geq \frac{\prof_k(p^\star_k)}{\ovp-p^\star_k}\:\: \text{ if } p_s> p^\star_k.
\end{equation}
By noticing that $(p_s-c)/(p^\star_k - c)\geq 1/2$, $(\ovp-p_s)/(\ovp-p^\star_k)\geq 1/2$, and $p_k^*\in[c,\ovp]$,
we deduce that in either case $\prof_k(p_s)\geq \prof_k(p^\star_k)/2$ for all $k\in\{1,\dots,K\}$. Multiplying this inequality by $\alpha_k$, adding it up over $k$ , and observing that $\Pi^U\geq\Pi^U(p_S)$, we obtain the desired result: 
$$\inf_{\alpha ,F}\frac{\Pi ^{U}(\alpha ,F)}{\Pi ^\star(\alpha ,F)} \geq 1/2.$$
\end{proof}
Theorem \ref{thm1} provides a fundamental guarantee of uniform
pricing compared to the optimal third-degree price discrimination. In
particular, the monopolist can  simply use a judiciously chosen price across
all customer segments to ensure half of the best possible profit from
perfectly discriminating across the different segments in the market. 
Interestingly, it is possible to achieve this profit guarantee by 
setting a price  that only uses the upper bound of the support and the marginal cost.\footnote{We thank John Vickers for this suggestion.} An implication of this is that half of the optimal profit 
can be secured by using only information about the upper end of segments' support, $\ovp$. While the simplicity of this pricing policy is appealing, the informational requirements might be too stringent. In practice, finding the upper bound of the support can be difficult because it can entail experimenting with high prices. 
We next investigate what other simple pricing policies can achieve the approximation guarantee in \cref{thm1} and discuss their informational requirements.

 In what follows, we provide a simple geometric argument as an alternative to the proof of \cref{thm1}. 
This alternative proof not only  sheds light on different informational requirements of simple pricing policies that can achieve at least 1/2 performance, but also shows how ideas used in approximate mechanism design translate to our setting. Indeed, the next argument is similar to the one presented in %
\citet{dhangwatnotai2015revenue} and \citet{hartline2013mechanism} for concave profit functions in quantile space, i.e., for regular distributions. %

Let 
\begin{equation*}
r_k\triangleq \q_k\prof_k(p^\star_k),
\end{equation*}
that is, $r_k$ corresponds to the maximum profit the seller can obtain from
the fraction $\alpha_k$ of  customers in segment $k$. Note that $\Pi^\star$ equals $%
\sum_{k=1}^Kr_k$. Since for each segment $k$ the profit function is concave
in $\Theta$, we can lower bound it by a triangular-shaped function that we
denote by $L_k(p)$ as depicted in Figure \ref{fig:density1} (a). 
\begin{figure}[]
\centering
\scalebox{0.675}{\begin{tikzpicture}[baseline=0pt,scale=0.6]
\def\n{9}
\def\dl{-16.5}
\draw [->,black,line width=0.8mm] (0,0) -- (1.7*\n,0);
\draw [->,black,line width=0.8mm] (0,0) -- (0,0.8*\n);
\draw node at (1.7*\n+0.1,-0.8) {\large $p$};
\draw [-,line width=0.8mm] (0,0.3) -- (0,-0.3) node at (0,-0.8) { $c$};

\draw [-,line width=0.8mm]  node at (-1.0,7.75) {\large $\sum_{k=1}^K L_k(p)$};

\draw [-,line width=0.8mm] (1.5*\n,0.3) -- (1.5*\n,-0.3) node at (1.5*\n,-0.8) {\large $\ovp$};
\draw [-,line width=0.8mm] (4.4010,0.3) -- (4.4010,-0.3) node at (4.4010,-0.8) {\large $p^\star_1$};
\draw [dashed,-,line width=0.4mm] (4.4010,0) -- (4.4010,3.95) node at (4.8,3.6) {};
\draw [-,line width=0.8mm] (6.75,0.3) -- (6.75,-0.3) node at (6.75,-0.8) {\large $p^\star_2$};
\draw [dashed,-,line width=0.4mm] (6.75,0) -- (6.75,5.0) node at (7.2,4.6) {};
\draw [-,line width=0.8mm] (10.0710,0.3) -- (10.0710,-0.3) node at (10.0710,-0.8) {\large $p^\star_3$};
\draw [dashed,-,line width=0.4mm] (10.0710,0) -- (10.0710,4.3) node at (9.7,3.95) {};

plot curves 
\draw[dashed,black,line width=0.6mm,shift={(0,0.1)}] plot file {TriangleRight1.data};
\draw[black,line width=0.6mm,shift={(0,0.1)}] plot file {TriangleRight2.data};

\draw[dashed,black,line width=0.4mm] (0.1,5.55)--(8.0,5.55) node at (-0.65,5.55) {\large $\Pi^U$};
\draw[dashed,black,line width=0.4mm] (8.0,0)--(8.0,5.55) node at (8.0,-0.7){$p^U$};
\draw [-,line width=0.8mm] (8.0,0.3) -- (8.0,-0.3);
\draw[dashed,black,line width=0.4mm] (0.1,5.0)--(7.0,5.0) node at (-0.65,5.0) {\large $\Pi^L$};

\draw [dashed,-,line width=0.4mm] (4.4010+\dl,0) -- (4.4010+\dl,2.0);
\draw [-,line width=0.8mm] (4.4010+\dl,0.3) -- (4.4010+\dl,-0.3) node at (4.4010+\dl,-0.8) {\large $p^\star_1$};
\draw [dashed,-,line width=0.4mm] (6.75+\dl,0) -- (6.75+\dl,3.5);
\draw [-,line width=0.8mm] (6.75+\dl,0.3) -- (6.75+\dl,-0.3) node at (6.75+\dl,-0.8) {\large $p^\star_2$};
\draw [-,line width=0.8mm] (10.0710+\dl,0.3) -- (10.0710+\dl,-0.3) node at (10.0710+\dl,-0.8) {\large $p^\star_3$};
\draw [dashed,-,line width=0.4mm] (10.0710+\dl,0) -- (10.0710+\dl,4.1);

\draw [->,black,line width=0.8mm] (0+\dl,0) -- (1.7*\n+\dl,0);
\draw [->,black,line width=0.8mm] (0+\dl,0) -- (0+\dl,0.8*\n);
\draw node at (1.7*\n+0.1+\dl,-0.8) {\large $p$};
\draw [-,line width=0.8mm] (0+\dl,0.3) -- (0+\dl,-0.3) node at (0+\dl,-0.8) { $c$};
\draw [-,line width=0.8mm]  node at (-1.0+\dl,7.5) {\large $\prof_{k}(p)$};
\draw [-,line width=0.8mm] (1.5*\n+\dl,0.3) -- (1.5*\n+\dl,-0.3) node at (1.5*\n+\dl,-0.8) {\large $\ovp$};

\draw[black,line width=0.6mm,shift={(0+\dl,0.1)}] plot file {TriangleLeft1-C.data};
\draw[black,line width=0.6mm,shift={(0+\dl,0.1)}] plot file {TriangleLeft2-C.data};
\draw[black,line width=0.6mm,shift={(0+\dl,0.1)}] plot file {TriangleLeft3-C.data};
\draw[black,dashed,line width=0.6mm,shift={(0+\dl,0.1)}] plot file {TriangleLeft1-T.data};
\draw[black,dashed,line width=0.6mm,shift={(0+\dl,0.1)}] plot file {TriangleLeft2-T.data};
\draw[black,dashed,line width=0.6mm,shift={(0+\dl,0.1)}] plot file {TriangleLeft3-T.data};

\node at (\n*0.8,-2.2){\large \textbf{(b)}};
\node at (\n*0.8+\dl,-2.2){\large \textbf{(a)}};
\end{tikzpicture}}
\caption{\textbf{(a)} The solid curves depict the concave profit function of
each segment, $\q_k\prof_k(p)$. The dashed lines depict the lower bounds $%
L_k(p) $ for each  segment. \textbf{(b)} The solid curve shows the sum of the
profit functions over segments, $\sum_{k=1}^K\protect\alpha_k \prof_{k}(p)$. The
dashed  curve shows the sum of the lower bound over  segments, $%
\sum_{k=1}^K L_{k}(p)$.}
\label{fig:density1}
\end{figure}
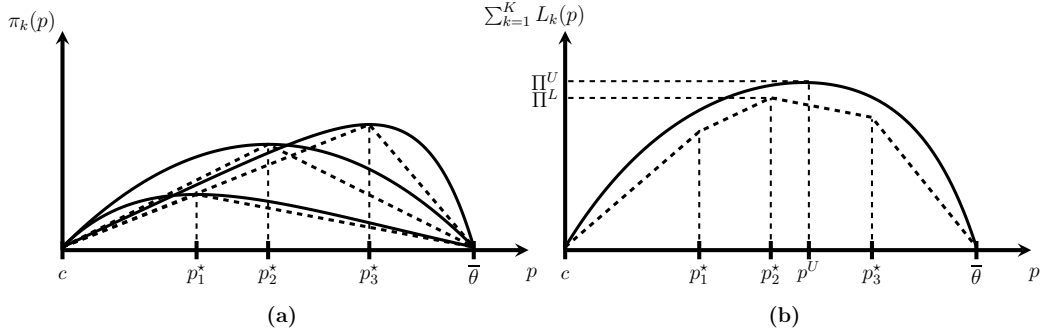

More precisely, we define the lower bound functions 
\begin{equation*}
L_k(p)\triangleq \twopartdef{\frac{r_k}{p^\star_k-c}\cdot
(p-c),}{p\in[c,p^\star_k];}{\frac{r_k}{\ovp-p^\star_k}\cdot(\ovp-p),}{p\in[p^%
\star_k,\ovp].}
\end{equation*}%
Observe that $\sum_{k=1}^{K}L_k(p)$ is a concave piecewise linear
function that achieves its maximum at some $p\in \{p_{k}^{\star }\}_{k=1}^{K}$.
We use $\Pi ^{L}$ to denote its maximum value. Then, it is easy to see that $%
\Pi ^{U}\geq \Pi ^{L}$ because $\sum_{k=1}^{K}L_k(p)$ lower bounds $%
\sum_{k=1}^{K}\alpha _{k}\prof_{k}(p)$, see Figure \ref{fig:density1} (b).  Next,
we argue that 
\begin{equation}
\Pi ^{L}=\max_{p\in \{p_{1}^{\star },\dots ,p_{K}^{\star }\}}\left\{
\sum_{k=1}^{K}L_k(p)\right\} \geq \frac{1}{2}\sum_{k=1}^{K}r_{k}=\frac{%
1}{2}\Pi^\star.  \label{eq:geom-1}
\end{equation}%
 Consider Figure \ref%
{fig:gem-1} and note that $\Pi ^{L}\cdot (\ovp-c)$ is equal to the area of the
smallest rectangle that contains the graph of $\sum_{k=1}^{K}L_k(p)$.
As a consequence, $\Pi ^{L}\cdot (\ovp-c)$ is an upper bound for the area below
the curve $\sum_{k=1}^{K}L_k(p)$. That is, 
\begin{equation*}
\Pi ^{L}\cdot (\ovp-c)\geq \int_{c}^{\ovp}\sum_{k=1}^{K}L_k(p)dp=%
\sum_{k=1}^{K}\int_{c}^{\ovp}L_k(p)dp=\sum_{k=1}^{K}\frac{r_{k}\cdot %
(\ovp-c)}{2},
\end{equation*}%
where in the last equality we have used the fact that $L_k(p)$ is
triangle-shaped and, therefore, the area below its curve equals $r_{k}\cdot %
(\ovp-c)/2$. Dividing both sides in the expression above by $(\ovp-c)$ yields
\eqref{eq:geom-1}, completing the proof.   In particular, 
\begin{equation}\label{eq:rev-two}
\Pi^{U} = \max_{p\in\Theta}\left\{\sum_{k=1}^{K}\q_k \prof_k(p)\right\}
\geq \max_{p\in\{p^\star_1,\dots, p^\star_K\}}\left\{\sum_{k=1}^{K}\q_k \prof_k(p)\right\}\geq \frac{1}{2}\Pi^\star.
\end{equation}
This argument  suggests two distinct yet simple  ways of selecting the uniform price.
First, the monopolist can optimize against the mixture of customer segments
to derive the optimal uniform price. This is advantageous for situations in
which the monopolist possesses aggregate market information \fcadd{but} 
discriminating across segments is not an available option. When the monopolist has
more granular market information, for example, the monopolist knows the
prices $\{p_{k}^{\star }\}_{k=1}^{K}$, then it is not necessary for the
monopolist to optimize over the full range of prices; he can simply
choose one of the $K$ prices at hand. 

\label{rev:sum-inf-2}
In sum, the two distinct arguments presented for \cref{thm1} complement each other and point to different ways  the monopolist has of achieving the performance guarantee of 1/2, depending on the information available. If the monopolist knows  the upper bound of the support, setting $p_s=(c+\ovp)/2$ is a simple choice. However, if such information is not available then there are two more  options \fcadd{that can be directly inferred from the inequalities \eqref{eq:rev-two}: {\it{(i)}} the monopolist can find the optimal uniform price for the aggregate demand or {\it{(ii)}} identify among the best per segment prices the uniform price that maximizes the revenue from the aggregate demand.}

\begin{figure}[]
\centering
\scalebox{0.8}{\begin{tikzpicture}[baseline=0pt,scale=0.6]
\def\n{9}
\def\dl{0}
\draw [->,black,line width=0.8mm] (0,0) -- (1.7*\n,0);
\draw [->,black,line width=0.8mm] (0,0) -- (0,1*\n);
\draw node at (1.7*\n+0.1,-0.8) {\large $p$};
\draw [-,line width=0.8mm] (0,0.3) -- (0,-0.3) node at (0,-0.8) { $c$};

\draw [-,line width=0.8mm] (1.5*\n,0.3) -- (1.5*\n,-0.3) node at (1.5*\n,-0.8) {\large $\ovp$};
\draw [-,line width=0.8mm] (4.4010,0.3) -- (4.4010,-0.3) node at (4.4010,-0.8) {\large $p^\star_1$};
\draw [dashed,-,line width=0.4mm] (4.4010,0) -- (4.4010,5.15) node at (4.8,3.6) {};
\draw [-,line width=0.8mm] (6.75,0.3) -- (6.75,-0.3) node at (6.75,-0.8) {\large $p^\star_2$};
\draw [dashed,-,line width=0.4mm] (6.75,0) -- (6.75,6.75) node at (7.2,4.6) {};
\draw [-,line width=0.8mm] (10.0710,0.3) -- (10.0710,-0.3) node at (10.0710,-0.8) {\large $p^\star_3$};
\draw [dashed,-,line width=0.4mm] (10.0710,0) -- (10.0710,6.1) node at (9.7,3.95) {};

plot curves 
\draw[black,line width=0.6mm,shift={(0,0.1)}] plot file {TriangleRight1-geom.data};

\draw [dashed,black,line width=0.4mm] (0,7)--(1.5*\n,7) node at (-0.65,7) {\Large $\Pi^L$};
\draw [dashed,black,line width=0.4mm] (1.5*\n,7)--(1.5*\n,0);

\draw [->,line width=0.4mm] (8,6.65) to [out=45,in=180] (10,8) node at (12.0,8) {$\sum_{k=1}^K L_{k}(p)$};

\draw [dashed,->,black,line width=0.4mm] (4.4010,1.9)--(1.5*\n+1.5,1.9) node at (1.5*\n+1.5+0.5,1.9) {$r_1$};
\draw [dashed,->,black,line width=0.4mm] (6.75,3.5)--(1.5*\n+1.5,3.5) node at (1.5*\n+1.5+0.5,3.5) {$r_2$};
\draw [dashed,->,black,line width=0.4mm] (10.0710,4.15)--(1.5*\n+1.5,4.15) node at (1.5*\n+1.5+0.5,4.15) {$r_3$};

\draw[black,line width=0.6mm,shift={(0+\dl,0.1)}] plot file {TriangleLeft1-T.data};
\draw[black,line width=0.6mm,shift={(0+\dl,0.1)}] plot file {TriangleLeft2-T.data};
\draw[black,line width=0.6mm,shift={(0+\dl,0.1)}] plot file {TriangleLeft3-T.data};

\end{tikzpicture}}
\caption{ Alternative geometric proof of Theorem \protect\ref{thm1}: the area below $%
\sum_{k=1}^{K}L_k(p)$ equals the sum of the areas below $L_k(p)$ for
all $k$.}
\label{fig:gem-1}
\end{figure}
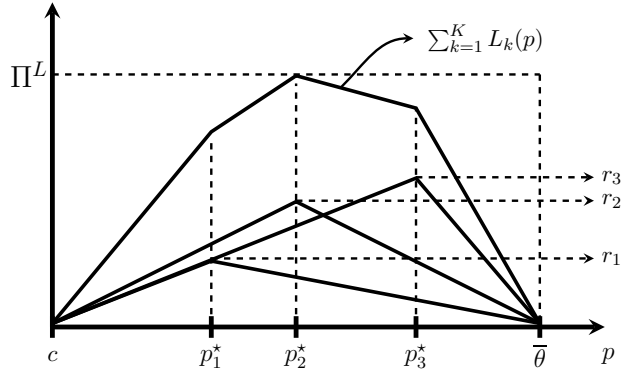

As mentioned in the introduction, our result and our second approach share some
similarities with \citet{dhangwatnotai2015revenue}. For the case of
one buyer---and regular distributions---they show that the expected profit
of randomly selecting a price achieves half of the optimal profit. Their
approach uses the fact that the profit function in the quantile space for
regular distributions is concave, and then proposes a uniform randomization over
quantities. We can use a similar argument to show that the expected
profit of uniformly choosing prices achieves half the profit of third-degree
price discrimination. Indeed, suppose we set a price $p$ at random such that 
$p\sim U[c,\ovp]$. Then, the expected profit is
\begin{equation}  \label{eq:uniform-price}
\E[\Pi^{U}(p)]=\int_{c}^{\ovp} \sum_{k=1}^{K}\q_k \prof_k(p)\cdot \frac{1}{%
\ovp-c}\:dp \geq\frac{1}{\ovp-c}\cdot \int_{c}^{\ovp} \sum_{k=1}^{K}
L_k(p) \:dp = \frac{1}{\ovp-c}\cdot \sum_{k=1}^K \frac{r_k\cdot (\ovp-c)}{2}= 
\frac{1}{2}\Pi^\star.
\end{equation}
We summarize this discussion in the following proposition. 
\label{revision-thm1-2}
\begin{proposition}[\textbf{Uniformly at random pricing}]\label{prop1}{\ \\}
Suppose that the profit functions $\prof_k(p)$ are concave 
and  have common and compact support.
Then for $p\sim U[c,\ovp]$ we have that $\E_p[\Pi^{U}(p)]$ is at least half as large as $\Pi^\star$.
\end{proposition}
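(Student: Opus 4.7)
The plan is to essentially reuse the triangular lower bound machinery from the alternative geometric proof of \cref{thm1}, but now integrate rather than evaluate at a single price. The starting point is the same: since each $\prof_k$ is concave on $[c,\ovp]$ and vanishes (or at worst stays nonnegative at the endpoints after shifting by $c$), the scaled profit curve $\alpha_k \prof_k(p)$ is bounded below by the triangle-shaped function $L_k(p)$ that equals $r_k = \alpha_k \prof_k(p_k^\star)$ at $p=p_k^\star$ and zero at the endpoints $c$ and $\ovp$. Summing over $k$ gives $\sum_k \alpha_k \prof_k(p) \geq \sum_k L_k(p)$ for every $p \in [c,\ovp]$.

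Next, I would write the expected profit from a random uniform price as an integral and apply the pointwise bound. Specifically, I would argue
\begin{equation*}
\E_p[\Pi^U(p)] = \frac{1}{\ovp-c}\int_c^{\ovp} \sum_{k=1}^K \alpha_k \prof_k(p)\, dp \;\geq\; \frac{1}{\ovp-c}\sum_{k=1}^K \int_c^{\ovp} L_k(p)\, dp.
\end{equation*}
The area under each triangle $L_k$ is elementary: base $(\ovp-c)$ times height $r_k$ divided by two, so $\int_c^{\ovp} L_k(p)\, dp = r_k(\ovp-c)/2$. Substituting this in, the $(\ovp-c)$ factors cancel and the sum collapses to $\tfrac{1}{2}\sum_k r_k = \tfrac{1}{2}\Pi^\star$, which is exactly what we want.

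This is essentially a direct repackaging of the display in equation \eqref{eq:uniform-price} already given in the text, so there is no real obstacle beyond verifying that concavity of $\prof_k$ on the common support $\Theta=[c,\ovp]$ does imply $\prof_k(p) \geq L_k(p)$ pointwise. The only subtle point worth stating explicitly is nonnegativity of $\prof_k$ at the boundary: at $p=c$ the profit is $0$, and at $p=\ovp$ it is $(\ovp-c)(1-F_k(\ovp))\geq 0$, so the triangle passing through $(c,0)$, $(p_k^\star, r_k)$, $(\ovp,0)$ lies weakly below the graph of $\alpha_k\prof_k$ on $[c,\ovp]$ by concavity. Once this is noted, the computation above completes the proof of \cref{prop1}.
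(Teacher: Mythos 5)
Your argument is correct and is essentially identical to the paper's own proof, which establishes \cref{prop1} precisely via the display in \eqref{eq:uniform-price}: bound each $\alpha_k\prof_k$ below by the triangle $L_k$, integrate over $[c,\ovp]$, and use that each triangle has area $r_k(\ovp-c)/2$. Your added remark verifying $\prof_k(c)=0$ and $\prof_k(\ovp)\geq 0$ so that concavity gives the pointwise bound $\alpha_k\prof_k(p)\geq L_k(p)$ is a correct and worthwhile explicit check, but it does not change the route.
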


To conclude this section, we note that the above pricing policies all correspond to specific instances of simple pricing. Hence, it is still possible that the optimal uniform pricing achieves a better performance than the one established in \cref{thm1}.  In the next proposition, we show that the latter is not possible by establishing that the profit guarantee in \cref{thm1} is tight. To see why this is true, consider Figure \ref{fig:tight-1}. There are two segments in the same proportion with maximum profit equal to 1. Assume that the profit of the first segment is very low at the price of the second segment and vice-versa. Then uniform pricing will achieve only the maximum profit of one of the segments, but very little of the profit from the other segment. In the figure, the best uniform pricing is $p_1^\star$ and it achieves $\frac{1}{2}\cdot 1+\frac{1}{2}\cdot \eps$; while perfect price discrimination achieves $\frac{1}{2}\cdot 1+\frac{1}{2}\cdot1=1$. As $\eps$ becomes small, the ratio of the profits of  optimal uniform pricing to third-degree price discrimination approaches $1/2$. In the proof of \cref{prop:tight-half}, we carefully construct the cumulative distribution functions to mimic the behavior in the above illustration.

\begin{figure}[]
\centering
\scalebox{0.8}{\begin{tikzpicture}[baseline=0pt,scale=0.6]
\def\n{9}
\def\dl{0}
\draw [->,black,line width=0.8mm] (0,0) -- (1.7*\n,0);
\draw [->,black,line width=0.8mm] (0,0) -- (0,1*\n);
\draw node at (1.7*\n+0.1,-0.8) {\large $p$};
\draw [-,line width=0.8mm] (0,0.3) -- (0,-0.3) node at (-0.25,-0.8) { $0$};

\draw [-,line width=0.8mm] (1.5*\n,0.3) -- (1.5*\n,-0.3) node at (1.52*\n,-0.8) {\large $\ovp$};

\draw [-,line width=0.8mm] (0.38,0.3) -- (0.38,-0.3) node at (0.45,-0.8) {\large $p^\star_1$};
\draw [dashed,-,line width=0.4mm] (0.38,0) -- (0.38,7.1) ;

\draw [-,line width=0.8mm] (13.1,0.3) -- (13.1,-0.3) node at (13.0,-0.8) {\large $p^\star_2$};
\draw [dashed,-,line width=0.4mm] (13.1,0) -- (13.1,7.1) ;

\draw [dashed,black,line width=0.4mm] (0,7.15)--(1.47*\n,7.15) node at (-0.65,7.15) {\Large $1$};

\draw [line width=0.4mm,decorate,decoration={brace,amplitude=8pt,mirror,raise=2pt},yshift=0pt]
(0.35,0) -- (0.35,1.3) node [black,midway,xshift=0.5cm] {\footnotesize
$\eps$};

\draw[black,line width=0.6mm,shift={(0+\dl,0.1)}] plot file {tight-R1.data};
\draw[black,line width=0.6mm,shift={(0+\dl,0.1)}] plot file {tight-R2.data};

\end{tikzpicture}}
\caption{Illustration of the construction in  \cref{prop:tight-half}. Profit functions for both segments.}
\label{fig:tight-1}
\end{figure}
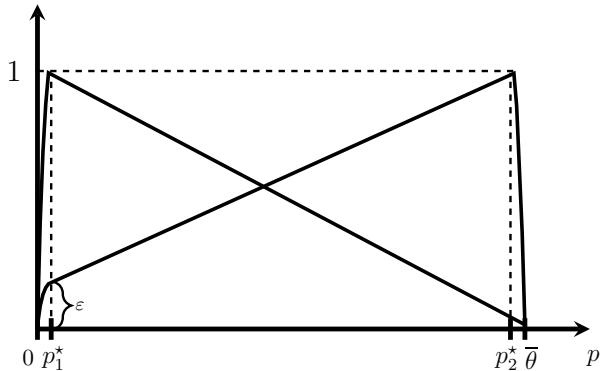

\newpage
\begin{proposition}[\textbf{Tightness}]\label{prop:tight-half}{\ \\}
Under the assumptions of \cref{thm1}
\begin{equation*}
\inf_{\alpha ,F}\frac{\Pi ^{U}(\alpha ,F)}{\Pi^\star(\alpha ,F)}=\frac{1}{2}.
\end{equation*}
\end{proposition}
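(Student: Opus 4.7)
The plan is to combine \cref{thm1} (which already gives $\Pi^U/\Pi^\star \geq 1/2$) with an explicit family of two-segment examples whose profit ratio converges to $1/2$ from above. Following the schematic in \cref{fig:tight-1}, the construction will have two equal-mass segments whose concave profit curves peak at the same height but at nearly opposite ends of a common support, so that any uniform price can capture essentially the full peak of only one segment and almost nothing from the other.

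Concretely, I would fix $c = 0$, $\alpha_1 = \alpha_2 = 1/2$, and the common support $\Theta = [0,n]$ for a large parameter $n$, and take the piecewise linear profit functions
\[
\prof_1(p) \;=\; \min\bigl\{p,\; (n-p)/(n-1)\bigr\}, \qquad \prof_2(p) \;=\; \min\bigl\{p/(n-1),\; n-p\bigr\}.
\]
Each $\prof_k$ is concave on $\Theta$, with peak value $1$ attained at $p_1^\star = 1$ and $p_2^\star = n-1$ respectively. Defining $F_k(p) \triangleq 1 - \prof_k(p)/p$ yields a legitimate CDF on $\Theta$: $F_k$ takes values in $[0,1]$ because $\prof_k(p) \le p$, and $F_k$ is nondecreasing because $\prof_k(p)/p$ is nonincreasing. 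Both facts can be checked directly from the explicit piecewise formulas.

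The two profits are then immediate. Since both peaks equal $1$, $\Pi^\star = (1/2)(1) + (1/2)(1) = 1$. Summing the two triangles, one finds $\prof_1(p) + \prof_2(p)$ is constant and equal to $1 + 1/(n-1)$ on the middle interval $[1, n-1]$ and strictly smaller on the two outer pieces, so $\Pi^U = (1 + 1/(n-1))/2$. Letting $n \to \infty$ drives the ratio $\Pi^U/\Pi^\star$ to $1/2$, which combined with \cref{thm1} yields the claimed infimum.

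The main subtle point concerns the degrees of freedom of the construction: the CDF constraint forces $(p-c)(1-F_k(p)) \le p - c$, and hence $\prof_k^\star \le p_k^\star - c$ for any concave profit function. Consequently, the two peaks cannot simultaneously reach height $1$ and be well separated on a small support, so no single fixed-support example attains the bound. This is why the construction must be parametric in $n$, with the support width expanding as the separation between $p_1^\star$ and $p_2^\star$ grows.
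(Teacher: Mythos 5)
Your proof is correct, and it reaches the tightness bound by a genuinely more elementary route than the paper. The paper's own construction builds two continuous, atomless distributions out of exponential pieces glued to hyperbolic/linear pieces, and then must verify continuity at the gluing points, a slope condition for concavity (their inequality relating $e^{-\lambda_2 a}$ to the middle slope), and monotonicity of the CDFs, which requires a delicate coupling of the parameters ($M-a=\eps^{-1/\kappa}$ with $\kappa$ small). Your tent functions $\prof_1(p)=\min\{p,(n-p)/(n-1)\}$ and $\prof_2(p)=\min\{p/(n-1),n-p\}$ make concavity and realizability immediate: each is a minimum of linear functions, $\prof_k(p)\le p$ gives $F_k\in[0,1]$, and the monotonicity of $\prof_k(p)/p$ gives a valid CDF; the computation $\prof_1+\prof_2\equiv n/(n-1)$ on $[1,n-1]$ then yields $\Pi^U/\Pi^\star=\tfrac12(1+\tfrac1{n-1})\to\tfrac12$ directly, and your closing remark correctly explains (via $\prof_2(p)\ge p/p_2^\star$ from concavity through the origin) why the support must grow and no single instance attains the infimum. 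The one trade-off is that your $F_2$ carries an atom of mass $(n-2)/(n-1)$ at value $0$ and both measures have gaps in their topological supports, whereas the paper's distributions are continuous with full support on $[0,M]$; since \cref{thm1} and the model place no density or full-support requirement on the $F_k$ (only that the profit functions be concave on a common bounded interval), this does not affect the validity of your argument, but the paper's heavier construction does establish the slightly stronger fact that tightness persists within the class of atomless distributions.
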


\begin{proof}Without loss of generality we take $c=0$.
Consider a case with two segments in the same proportion and the following distributions: 
\begin{equation*}
1-F_1(p) \triangleq \twopartdef{\frac{1}{\lambda_1 p}\left(1-e^{-\lambda_1 p}\right),}{p\leq a;}{\frac{1}{p}\frac{(M-p)}{(M-a)},}{p>a,}
\end{equation*}
and 
\begin{equation*}
1-F_2(p) \triangleq \threepartdef{\frac{1}{\lambda_2 p}\left(1-e^{-\lambda_2 p}\right),}{p\leq a;}{\frac{1}{p}\left\{\frac{1-\left(\frac{a}{M-a}+\eps\right)}{(M-2a)}(p-a) + \left(\frac{a}{M-a}+\eps\right)\right\},}{p\in(a,M-a);}{
\frac{1}{p}\left\{
-\frac{1}{a}(p-M+a)+1
\right\},
}{p\in [M-a,M],}
\end{equation*}
with support $\Theta = [0,M]$ for some $M>2a$ to be determined, and we choose $\lambda_k>0$ such that $F_k(\cdot)$ is continuous at $a>1$, and  $\eps>0$ is a small parameter. 
Below, we will show how to choose the parameters to ensure that 
$F_k(\cdot)$ is a well-defined continuous distribution for $k\in\{1,2\}$. First, we  verify that $\prof_k$ is concave for $k\in\{1,2\}$. Indeed, for $k=1$ we have 
\begin{equation*}
\prof_1(p) \triangleq \twopartdef{\frac{1}{\lambda_1}\left(1-e^{-\lambda_1 p}\right),}{p\leq a;}{\frac{(M-p)}{(M-a)},}{p>a.}
\end{equation*}
The first piece of $\prof_1(\cdot)$ is increasing and concave while the second piece is decreasing and linear. The continuity of $F_1$ requires that $\prof_1(a)=1$, equivalently, $(1-e^{-a\lambda_1})/(a\lambda_1) = 1/a$. The latter always has a solution because the function $(1-e^{-x})/x$ maps to the entire interval $(0,1]$ and we are assuming $a>1$. We conclude that $\prof_1(\cdot)$ is concave in $\Theta$. Additionally, by taking the derivative of $1-F_1(\cdot)$, it is possible to verify that $1-F_1(\cdot)$ is decreasing as long as $M>a$.

We now verify the concavity of $\prof_2(\cdot)$. We have
\begin{equation*}
\prof_2(p) \triangleq \threepartdef{\frac{1}{\lambda_2}\left(1-e^{-\lambda_2 p}\right),}{p\leq a;}{\frac{1-\left(\frac{a}{M-a}+\eps\right)}{(M-2a)}(p-a) + \left(\frac{a}{M-a}+\eps\right),}{p\in(a,M-a);}{
-\frac{1}{a}(p-M+a)+1,
}{p\in [M-a,M].}
\end{equation*}
Note that as in the previous case, each piece of $\prof_2(\cdot)$ is concave. We can also verify that it is continuous. This is clear at $p=M-a$, and for $p=a$ we must choose $\lambda_2$ to ensure continuity. In this case, we must have $\prof_2(a)=\eps+a/(M-a)$; equivalently, $(1-e^{-a\lambda_2})/(a\lambda_2) = \eps/a +1/(M-a)$. Assuming that 
\begin{equation}\label{ex1:cond-1}
 \eps/a +1/(M-a)<1,
 \end{equation}
we can always find $\lambda_2$ that makes $\prof_2(\cdot)$ continuous at $p=a$. In turn, $\prof_2(\cdot )$ linearly decreases in $[M-a,M]$, linearly increases in $(a,M-a]$ (because of \eqref{ex1:cond-1}), and increases as a strictly concave function in $[0,a]$. Thus to ensure concavity we need to verify that the slope of $\prof_2(p)$ from the left of $p=a$ is larger than the slope from the right at $p=a$, that is,  
  \begin{equation}\label{ex1:cond-2}
  e^{-\lambda_2 a} \geq \frac{1-\left(\frac{a}{M-a}+\eps\right)}{(M-2a)}. 
  \end{equation}
To see why \eqref{ex1:cond-2} holds, note first that $\lambda_2$, which 
solves $(1-e^{-a\lambda_2})/(a\lambda_2) = \eps/a +1/(M-a)$,  is bounded above by the solution to 
$1/(a\tilde{\lambda}_2) = \eps/a +1/(M-a)$, that is, $\tilde{\lambda}_2\geq \lambda_2$. Hence to verify 
\eqref{ex1:cond-2} it suffices to show that 
  \begin{equation}\label{ex1:cond-3}
  \exp\left(-\frac{1}{ \eps/a +1/(M-a)} \right) \geq \frac{1-\left(\frac{a}{M-a}+\eps\right)}{(M-2a)}. 
  \end{equation}
Let us choose $M$ such that $\eps = 1/(M-a)^{\kappa}$ with $\kappa>0$ to be determined, that is,  $M-a = \eps^{-1/\kappa}$. Then \eqref{ex1:cond-3} becomes
  \begin{equation*}
  \exp\left(-\frac{1}{ \eps/a +\eps^{1/\kappa}} \right) \geq \frac{1-\left(a \eps^{1/\kappa} +\eps\right)}{(\eps^{-1/\kappa}-a)}. 
  \end{equation*}
  Note that for $\eps<1$, as $\kappa\downarrow 0$ the left-hand side converges to $e^{-a/\eps}$ while the right-hand side converges to 0. Hence we can always choose $\kappa\in (0,1)$ such that  \eqref{ex1:cond-2} holds. Moreover, since $\eps^{1/\kappa}<\eps$ (because $\kappa<1$), we have that \eqref{ex1:cond-1} is always satisfied for $\epsilon$ small enough. In conclusion, we can always choose 
  the parameter of our instance such that $\prof_2(\cdot)$ is concave. Additionally, by taking the derivative of $F_2(\cdot)$, it is possible to verify that it is decreasing as long as $M>2a$ (which can also be achieved for $\kappa\in(0,1)$ small).

Under perfect price discrimination we have that $\prof_k(p_k^\star)=1$ for $k\in\{1,2\}$, so that $\Pi^\star=1$. Now, because both  $\prof_1(\cdot)$ and $\prof_2(\cdot)$ increase in $[0,a]$ and decrease in $[M-a,a]$, the optimal uniform price must lie in $[a,M-a]$. In this interval both functions are linear and the derivative of their sum is $-\eps/(M-2a)<0$. Hence the optimal uniform price is $p_u=a$, which yields 
\begin{equation*}
\Pi^U =\frac{1}{2}\cdot 1+\frac{1}{2}\cdot \left( a\eps^{1/\kappa}+ \eps\right)\leq \frac{1}{2} +\frac{1}{2}\eps(a+1).
\end{equation*}
This implies that 
\begin{equation*}
\inf_{\alpha ,F}\frac{\Pi ^{U}(\alpha ,F)}{\Pi^\star(\alpha ,F)}\leq \frac{1}{2} +\frac{1}{2}\eps(a+1).
\end{equation*}
Because $\eps>0$ is arbitrary, we conclude that the lower bound performance of $1/2$ in \cref{thm1} is tight. 
\end{proof}

We note that \cref{prop:tight-half} and \cref{thm1} imply that the ratio in \eqref{eq:min-ratio} does not degrade too fast in the number of segments.
Indeed, the worst performance can be achieved for the case $K=2$, but as $K$ increases the performance does not continue to degrade. This is a consequence of the concavity assumption. If we did not assume concavity we could have profit functions that take values close to zero around the optimal prices for other segments and a value of, for example, 1 at their own optimal per-segment prices. In this case, the ratio in \eqref{eq:min-ratio} would degrade at rate $1/K$ (see also \cref{sec:wcp}).

\section{Profit Performance in General Environments}

\label{sec:other-env} In this section we examine 
weaker conditions relative to the
environment studied in Section \ref{sec:concave}. In particular, we aim to
understand how the profit ratio behaves when we relax the assumptions in
Theorem \ref{thm1}. We first consider the assumption of compact support and then
consider different supports across customer segments. Then, we study
non-concave environments. In the latter, we are especially interested in
common well-behaved environments such as regular and  monotone hazard
rate (MHR) value distributions.  For ease of exposition and without loss of generality in this section and the following we will assume $c=0$, unless otherwise stated.

\subsection{Concave with Unbounded Support}

In Theorem \ref{thm1}, we considered concave profit functions supported on
some common finite interval $\Theta $. In the next proposition, we relax the
finite support assumption while keeping a common support and concave profit
functions across customer segments. 

\begin{proposition}[\textbf{Zero profit gap with unbounded support}]\label{prop:no-gap-com-sup}{\ \\}
Suppose that the profit functions for all segments are concave with common and unbounded support $\Theta=\R_+$. Then $\Pi^{U}=\Pi^\star$. 
\end{proposition}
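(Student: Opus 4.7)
The plan is to show that on unbounded common support $\R_+$, concavity combined with non-negativity of the profit functions forces each $\pi_k$ to be non-decreasing, which collapses the gap between uniform pricing and third-degree discrimination.

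First, I would observe that $\pi_k(0) = 0 \cdot (1-F_k(0)) = 0$ and $\pi_k(p) \geq 0$ for all $p \in \R_+$ since $F_k$ is a CDF. Combined with concavity on $[0,\infty)$, this forces $\pi_k$ to be non-decreasing. The reason is that if there existed $0 \leq a < b$ with $\pi_k(a) > \pi_k(b)$, then the secant slope $(\pi_k(b)-\pi_k(a))/(b-a)$ is strictly negative; concavity implies that for every $c > b$, the slope from $b$ to $c$ is bounded above by this negative quantity, which drives $\pi_k(c) \to -\infty$ as $c \to \infty$ and contradicts non-negativity.

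Given monotonicity, each per-segment supremum satisfies
\begin{equation*}
\pi_k(p_k^\star) = \sup_{p \in \R_+} \pi_k(p) = \lim_{p \to \infty} \pi_k(p) \triangleq L_k,
\end{equation*}
so $\Pi^\star = \sum_{k=1}^K \alpha_k L_k$ (with the convention $p_k^\star = +\infty$ when the supremum is not attained at any finite price). Since a non-negative combination of non-decreasing functions is non-decreasing, the aggregate $\sum_k \alpha_k \pi_k(p)$ is itself non-decreasing in $p$, so its supremum is likewise attained in the limit:
\begin{equation*}
\Pi^U = \sup_{p \in \R_+} \sum_{k=1}^K \alpha_k \pi_k(p) = \lim_{p \to \infty} \sum_{k=1}^K \alpha_k \pi_k(p) = \sum_{k=1}^K \alpha_k L_k = \Pi^\star,
\end{equation*}
where the second equality uses monotone convergence to interchange the limit and the finite sum. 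This yields the claim.

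I do not anticipate a substantive obstacle: the only subtle point is justifying the monotonicity of each $\pi_k$ from concavity plus non-negativity on an unbounded domain, and this is handled by the secant-slope argument above. A minor bookkeeping issue is the possibility that $L_k$ is not attained at any finite $p_k^\star$; this is handled transparently by interpreting $\Pi^\star$ as a supremum, which agrees with the definition whenever the $\argmax$ is nonempty.
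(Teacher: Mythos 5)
Your proposal is correct and follows essentially the same route as the paper: both arguments rest on the observation that concavity on all of $\R_+$ together with non-negativity forces each $\prof_k$ to be non-decreasing, so that a single sufficiently large price (the paper takes $p_u=p_K^\star=\max_k p_k^\star$; you take the limit $p\to\infty$) realizes every segment's optimum simultaneously. Your version is slightly more careful in spelling out the secant-slope justification for monotonicity and in handling the case where a per-segment supremum is not attained at a finite price, which the paper's terser argument glosses over.
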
%
\begin{proof}
Without loss of generality, assume that $p_1^\star\le \cdots\le p_K^\star$.
Note that the concavity of the profit functions together with the unbounded support assumption causes each $\prof_k(p)$
to be increasing up to $p_k^\star$ and  then constant and equal to $\prof_k(p_k^\star)$ for any price $p$ larger than
$p_k^\star$ for all $k\ge 1$. This leads to the following distributions:
\label{pg:rev-no-bound-sup}
\begin{equation*}
1- F_k(p)=\twopartdef{ h_k(p)/p,}{p\leq p_k^\star;}{\prof_k(p_k^\star)/p,}{p\geq p_k^\star,}
\end{equation*}
for some increasing and concave function $h_k(\cdot)$ such that $\lim_{p\rightarrow 0}h_k(p)/p =1$, $h_k(p)/p$ is decreasing, and $h_k(0)=0$.
In turn, by setting $p_u$ equal to $p_K^\star$, the profit $\Pi^{U}$ becomes $\sum_{k=1}^K \alpha_k\prof_k(p_k^\star)=\Pi^\star$.
\end{proof}The proposition establishes that in the concave case with
unbounded support there is no gap in the profit between no price discrimination
and full price discrimination. The intuition behind Proposition \ref%
{prop:no-gap-com-sup} is simple. Concavity, together with the unbounded
support assumption, implies that the marginal profit for each segment must
equal zero for sufficiently large prices. As a consequence, setting an equal
and sufficiently large price for every segment achieves the optimal
third-degree price discrimination outcome.

\subsection{Significance of Common Support}

Here we consider concave profit functions supported on some finite interval $%
\Theta_k$ for each segment $k\ge 1$. In contrast to the previous section,
we will not assume that $\Theta_k=\Theta$ for all segments.
{In order to gain intuition, note that, for an arbitrary distribution, the optimal revenue can be arbitrarily small compared to the expected surplus. Indeed, consider for example the distribution $F(v)=1-1/(1+v)$ for $v\ge0$.\footnote{\cite{hartline2009simple} use this distribution to provide a lower bound
on the worst revenue performace ratio of Vickrey with duplicated bidders and the optimal auction.} For this distribution, the optimal revenue  $\sup_{v}\{ v\cdot (1-F(v))\}$ is 1 while the expected surplus is $\infty$.  Hence, if we consider segments with point-mass distributions for every value $v$, the monopolist profit under uniform pricing (which would correspond to the optimal revenue for $F$) can be arbitrarily bad compared to optimal third-degree price discrimination---which would correspond to the expected surplus for $F$.
In the next result, we make this intuition precise  and show that there is a discrete collection of non-point-mass distributions with non-common support for which uniform pricing delivers arbitrarily small profit as the number of  segments increases. 
\label{revision-no-common-sup-1}

\begin{proposition}[\textbf{Significance of common support}]\label{prop:no-com-sup}{\ \\}
\fcadd{If the segments can have distinct supports then the optimal uniform price may yield an arbitrarily small profit ratio as the number of segments increases.}
\end{proposition}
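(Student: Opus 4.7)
The plan is to exhibit an explicit sequence of instances, indexed by the number of segments $K$, for which the profit ratio tends to zero as $K\to\infty$. The construction uses nested, dyadically-spaced supports so that third-degree price discrimination scales linearly in $K$ while no single uniform price can capture more than a bounded amount of aggregate profit. This provides a constructive alternative to the inductive argument of \citet{malueg2006bounding}.

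For each $K\ge 2$, I would take $K$ segments where segment $k\in\{1,\dots,K\}$ has support $\Theta_k=[0,2^k]$ and uniform demand $F_k(p)=p/2^k$ on $\Theta_k$. The profit function $\prof_k(p)=p(1-p/2^k)$ is then a downward parabola on $\Theta_k$, hence concave, with peak $\prof_k(p^\star_k)=2^{k-2}$ at $p^\star_k=2^{k-1}$. I would choose $\alpha_k=c_K\cdot 2^{2-k}$ with $c_K=1/(4(1-2^{-K}))$, so that $\sum_k\alpha_k=1$ and every segment contributes exactly $\alpha_k\prof_k(p^\star_k)=c_K$ to the discriminatory profit. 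This immediately gives $\Pi^\star(\alpha,F)=K\,c_K\ge K/4$, linear in $K$.

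To bound $\Pi^U(\alpha,F)$ uniformly in $K$, I would decompose the admissible price range $\bigcup_k\Theta_k=[0,2^K]$ into dyadic intervals $(2^{j-1},2^j]$ for $j=1,\dots,K$. For $p$ in such an interval, segments with $k<j$ contribute zero since $p>2^k$ lies above their support, while segments with $k\ge j$ contribute $\alpha_k p(1-p/2^k)$. Hence the restriction of $\Pi^U(\cdot)$ to the interval is a downward parabola whose coefficients are geometric sums indexed by $k\ge j$. Evaluating these sums in closed form and maximizing the parabola, I would show that its maximum is bounded above by an absolute constant ($3c_K\le 1$) independent of $j$ and $K$; taking the max over intervals yields $\Pi^U(\alpha,F)\le 1$. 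Combined with the lower bound on $\Pi^\star$, this gives $\Pi^U/\Pi^\star\le 4/K$, which tends to zero.

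The main obstacle is the bookkeeping to evaluate the geometric sums cleanly and to verify that the parabola's unconstrained maximizer on each dyadic interval lies inside that interval, so that the unconstrained and constrained maxima coincide. A direct computation shows the maximizer lands near $(3/4)\cdot 2^j$, which is interior to $(2^{j-1},2^j]$, making this step routine. Once this is in place, the conclusion follows immediately.
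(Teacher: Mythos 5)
Your construction is correct, and it takes a genuinely different route from the paper's. The paper keeps equal weights $\alpha_k=1/K$ and uses nearly degenerate triangular profit functions: $\prof_k$ rises along the $45$-degree line up to $v_k=1/(K-k+1)$ and then drops to zero over a width $\eps_k$ chosen so the supports' upper endpoints interlace; this yields $\Pi^\star=\tfrac{1}{K}\sum_{k=1}^K\tfrac1k\approx\log(K)/K$ while $\Pi^U=1/K$, so the ratio decays like $1/\log K$. You instead use smooth uniform demands on geometrically growing supports $[0,2^k]$ with geometrically decaying weights calibrated so each segment contributes the same amount $c_K$ to $\Pi^\star$; the dyadic decomposition then bounds $\Pi^U$ by an absolute constant, giving a ratio of order $1/K$. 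Your version buys a strictly faster decay rate --- matching the general $1/K$ worst case of Section 5, but achieved here with concave profit functions and non-degenerate demands --- whereas the paper's example is closer in spirit to the equal-revenue intuition it states just before the proposition and makes the mechanism (non-overlapping upper supports killing all but one segment's peak) more transparent. One small remark: the interiority check you flag as the ``main obstacle'' is not actually needed, since the constrained maximum over each dyadic interval is in any case bounded by the unconstrained maximum $A^2/(4B)$ of the parabola; interiority would only matter if you wanted the bound to be tight.
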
%
\begin{proof} We construct concave profit functions with finite support such that $\Theta_k\neq \Theta_j$ for all $k\neq j$. Let the 
distributions $\{F_k(\cdot)\}_k$ be defined by 
\begin{equation*}
F_k(p)  = \twopartdef{0,}{p\in [0,v_k];}{\frac{(p-v_k)(v_k+\eps_k)}{\eps_k p},}{p\in (v_k,v_k+\eps_k],}
\end{equation*}
with 
\begin{equation*}
v_k= \frac{1}{(K-k+1)},\quad  \q_k=\frac{1}{K},\quad \eps_k\in(0,v_{k+1}-v_{k}), \quad \forall k\in\{1,\dots,K\}.
\end{equation*}
This leads to the following profit functions:
\begin{equation*}
\prof_k(p)  = \twopartdef{p,}{p\in [0,v_k];}{\frac{v_k}{\eps_k}(v_k+\eps_k-p),}{p\in (v_k,v_k+\eps_k].}
\end{equation*}
The perfect price discrimination profit is 
\begin{equation*}
\Pi^\star=\sum_{k=1}^K\q_k v_k=\frac{1}{K}\sum_{k=1}^K\frac{1}{K-k+1}=\frac{1}{K}\sum_{k=1}^K\frac{1}{k}.
\end{equation*}
The optimal uniform price must be achieved at one of the $v_k$, hence
\begin{equation*}
\Pi^{U}=\frac{1}{K}\max_{k=1,\dots,K}\Big\{ \sum_{j=k}^Kv_k\Big\}
=\frac{1}{K}\max_{k=1,\dots,K}\Big\{\frac{1}{K-k+1}\cdot (K-k+1)\Big\}=\frac{1}{K}.
\end{equation*}
Hence, 
\begin{equation*}
\frac{\Pi^{U}}{\Pi^\star} = \frac{\frac{1}{K}}{\frac{1}{K}\sum_{k=1}^K\frac{1}{k}}=
 \frac{1}{\sum_{k=1}^K\frac{1}{k}}\approx \frac{1}{\log(K)}\rightarrow 0 \quad \text{as } K\uparrow \infty.
\end{equation*}
\end{proof}

\begin{figure}[h]
\centering
\scalebox{0.8}{\begin{tikzpicture}[baseline=0pt,scale=0.6]
\def\n{9}
\def\dl{16.5}

\draw [->,black,line width=0.8mm] (0,0) -- (10,0);
\draw [->,black,line width=0.8mm] (0,0) -- (0,10);
\draw node at (10,-0.8) {\large $p$};
\draw [-,line width=0.8mm] (0,0.3) -- (0,-0.3) node at (0,-0.8) { $\unp$};
\draw [-,line width=0.8mm]  node at (-1.0,10.5) {\large $\prof_{k}(p)$};
\draw [-,line width=0.8mm] (9,0.3) -- (9,-0.3) node at (9,-0.8) {\large $1$};

\draw[-,line width=0.4mm](0,0)--(9,9);

\draw[-,line width=0.8mm](1.8,0.2)--(1.8,-0.2)node at (1.8,-0.75){\large $\frac{1}{5}$};
\draw[line width=0.4mm] (1.8,1.9)--(2.1,0);
\draw[dashed,line width=0.4mm] (0,1.9)--(1.8,1.9) node at (-0.6,1.9) {\small $1/5$};

\draw[-,line width=0.8mm](2.25+0.3,0.2)--(2.25+0.3,-0.2)node at (2.25+0.3,-0.75){\large $\frac{1}{4}$};
\draw[line width=0.4mm] (2.25+0.3,2.4)--(2.9,0);
\draw[dashed,line width=0.4mm] (0,2.45)--(2.25+0.3,2.45) node at (-0.6,2.45) {\small $1/4$};

\draw[-,line width=0.8mm](3+0.4,0.2)--(3+0.4,-0.2)node at (3+0.4,-0.75){\large $\frac{1}{3}$};
\draw[line width=0.4mm] (3+0.4,3.32)--(3.9,0);
\draw[dashed,line width=0.4mm] (0,3.3)--(3+0.4,3.3) node at (-0.6,3.3) {\small $1/3$};

\draw[-,line width=0.8mm](4.5+0.5,0.2)--(4.5+0.5,-0.2)node at (4.5+0.5,-0.75){ \large $\frac{1}{2}$};
\draw[line width=0.4mm] (4.5+0.5,4.9)--(5.7,0);
\draw[dashed,line width=0.4mm] (0,4.9)--(4.5+0.5,4.9) node at (-0.6,4.9) {\small $1/2$};

\draw[line width=0.4mm] (9,9)--(9,0);
\draw[dashed,line width=0.4mm] (0,9)--(9,9) node at (-0.6,9) {\small $1$};

`

\draw [->,black,line width=0.8mm] (0+\dl,0) -- (10+\dl,0);
\draw [->,black,line width=0.8mm] (0+\dl,0) -- (0+\dl,10);
\draw node at (10+\dl,-0.8) {\large $p$};
\draw [-,line width=0.8mm] (0+\dl,0.3) -- (0+\dl,-0.3) node at (0+\dl,-0.8) { $\unp$};
\draw [-,line width=0.8mm]  node at (-1.0+\dl,10.5) {\large $\sum_{k=1}^K \alpha_k \prof_{k}(p)$};
\draw [-,line width=0.8mm] (9+\dl,0.3) -- (9+\dl,-0.3) node at (9+\dl,-0.8) {\large $1$};
\draw[dashed,line width=0.4mm] (0+\dl,9)--(9+\dl,9) node at (-0.7+\dl,9) {\small $1/5$};

\draw[-,line width=0.8mm](1.8+\dl,0.2)--(1.8+\dl,-0.2)node at (1.8+\dl,-0.75){\small $\frac{1}{5}$};
\draw[-,line width=0.8mm](2.25+0.3+\dl,0.2)--(2.25+0.3+\dl,-0.2)node at (2.25+0.3+\dl,-0.75){\small $\frac{1}{4}$};
\draw[-,line width=0.8mm](3+0.4+\dl,0.2)--(3+0.4+\dl,-0.2)node at (3+0.4+\dl,-0.75){\small $\frac{1}{3}$};
\draw[-,line width=0.8mm](4.5+0.5+\dl,0.2)--(4.5+0.5+\dl,-0.2)node at (4.5+0.5+\dl,-0.75){\small $\frac{1}{2}$};

\draw[-,line width=0.4mm](0+\dl,0)--(1.8+\dl,9);
\draw[-,line width=0.4mm](1.8+\dl,9)--(1.9+\dl,7.5);
\draw[-,line width=0.4mm](1.9+\dl,7.5)--(2.25+0.3+\dl,9);
\draw[-,line width=0.4mm](2.25+0.3+\dl,9)--(2.8+\dl,6.5);
\draw[-,line width=0.4mm](2.8+\dl,6.5)--(3+0.4+\dl,9);
\draw[-,line width=0.4mm](3+0.4+\dl,9)--(3.8+\dl,5);
\draw[-,line width=0.4mm](3.8+\dl,5)--(4.5+0.5+\dl,9);
\draw[-,line width=0.4mm](4.5+0.5+\dl,9)--(5.8+\dl,4);
\draw[-,line width=0.4mm](5.8+\dl,4)--(9+\dl,9);

\node at (5,-2.2){\large \textbf{(a)}};
\node at (5+\dl,-2.2){\large \textbf{(b)}};
\end{tikzpicture}}
\caption{ Example for the construction of concave profit functions with
different supports, as in the proof of Proposition \protect\ref%
{prop:no-com-sup} for $K=5$. In \textbf{(a)} we illustrate the profit
functions for each segment where the value and optimal per-segment price
decay as $1/k$. In \textbf{(b)} we show $\sum_{k=1}^{K}\protect\alpha 
_{k}\prof_{k}(p)$ which is maximized at any of the per-segment optimal prices and
is bounded above by $1/K$.}
\label{fig:no-common-support}
\end{figure}
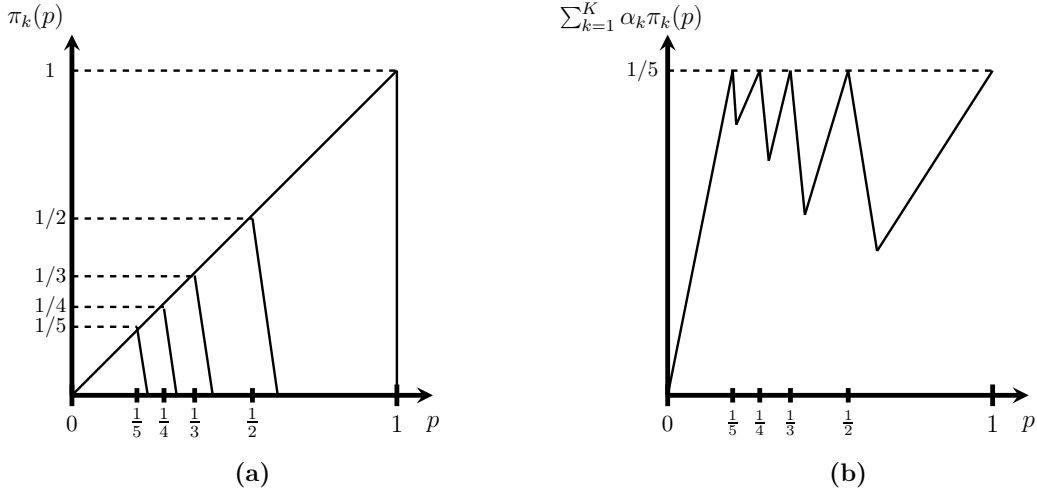
Proposition \ref{prop:no-com-sup} shows the importance of the common support
assumption. Above we construct concave profit functions that have
finite support, but the endpoints of the supports are increasing, see
Figure \ref{fig:no-common-support} \textbf{(a)}. In the construction, all
segments are given the same weight and the profit functions are triangle-shaped. All of them start at zero, go up along the 45 degree line, peak at $%
1/k$, and then go down sharply such that the upper end of the support of segment 
$k$ is strictly between $\frac{1}{k}$ and $\frac{1}{k-1}$ (solid lines in
Figure \ref{fig:no-common-support} \textbf{(a)}). In turn, $\Pi^\star$
(normalized by the per-segment proportions) grows logarithmically with $K$.
Since the upper bound of the supports are strictly increasing and non-overlapping,
the uniform price profit at the per-segment optimal prices, $\Pi ^{U}(1/k)$,
is constant and equal to 1 (normalized by the per-segment proportions), see
Figure \ref{fig:no-common-support} \textbf{(b)}. For example, consider $%
p_{3}^{\star }=1/3$. At this price, $\prof_{1}(1/3)=\prof_{2}(1/3)=0$ and $%
\prof_{3}(1/3)=\prof_{4}(1/3)=\prof_{5}(1/3)=1/3$. Hence $5\cdot \Pi ^{U}=0+0+3\cdot 
\frac{1}{3}=1$. As a result, the ratio in \eqref{eq:min-ratio} goes to zero as
the number of segments increases. Note that this only works because the
profit functions do not have common support. The non-common support allows
for the possibility of having profit functions such that at the per-segment
optimal prices some of them have zero profit.
We note that a similar version of this result was stated in %
\citet{malueg2006bounding}, Proposition 2. Their proof is inductive whereas our proof is
constructive and provides a transparent view regarding the significance of the common support.
Finally, we note that in the context of simple optimal auctions,  \citet{hartline2009simple} developed a related construction to argue the necessity of a single-item setting and regular distributions to obtain 
 revenue guarantees for anonymous reserve price 
versus the optimal auction that do not scale with the number of bidders. 

\subsection{Non-Concave Environments}\label{sec:non-con-env}

One of the most commonly analyzed families of distributions in the mechanism
design literature is regular distributions. These are distributions such
that the virtual value function is non-decreasing. Formally, we say that
distribution $F$ is regular if and only if $\phi (p)$ is non-decreasing: 
\begin{equation*}
\phi (p)\triangleq p-\frac{1-F(p)}{f(p)}.
\end{equation*}
As pointed out in Section \ref{lit}, several approximation guarantees have been obtained for these distributions in diverse settings. One of the
main insights used in the literature is that the profit function associated
with this family of distributions is concave in the quantile space. Indeed, let 
$\prof(p)=p\cdot (1-F(p))$ and consider the change of variables $q=1-F(p)$.
Define the profit function in the quantile space as $\widehat{\prof}(q)=q\cdot
F^{-1}(1-q)$. Then 
\begin{equation*}
\frac{d}{dq}\widehat{\prof}(q)=F^{-1}(1-q)-\frac{q}{f(F^{-1}(1-q))}=\phi
(F^{-1}(1-q)).
\end{equation*}%
and since $\phi (\cdot )$ is non-decreasing, we can conclude that $\widehat{\prof}(q)$ is
concave. The concavity of $\widehat{\prof}(q)$ allows arguments similar to the ones
employed in the triangular proof of Theorem \ref{thm1}. For example, %
\citet{dhangwatnotai2015revenue} use this property to show that with
one bidder, the expected profit from random pricing (uniformly selecting a
quantile) is half the profit of the optimal monopoly price. 
This suggests that a similar approach may work in our framework. In
particular, we ask whether it is possible with regular distributions
to exploit the concavity of the profit functions to obtain 
good approximation guarantees.

Recall that for each segment $k\geq 1$, the profit function comes from a cdf 
$F_{k}$, for which we assume its pdf $f_{k}$ is well-defined%
. To switch to the quantile space, for any $p\in \Theta $, we would
need to define 
\begin{equation*}
q_{k}=1-F_{k}(p)\quad \text{and}\quad \widehat{\prof}_{k}(q)=q\cdot F_{k}^{-1}(q).
\end{equation*}%
Let $\boldsymbol{q}=\{q_k\}_{k=1}^K$. The optimal uniform price profit, $\Pi ^{U}$, is given by 
\begin{align*}
\Pi ^{U}& =\max_{0\leq \boldsymbol{q}\leq 1}\sum_{k=1}^{K}\alpha _{k}\cdot 
\widehat{\prof}_{k}(q_{k}) \\
& \quad \text{s.t.}\quad F_{k}^{-1}(1-q_{k})=F_{j}^{-1}(1-q_{j})\quad
\forall k,j.
\end{align*}%
Note that in this formulation %
 the objective function is
the sum of concave functions. However, we have additional
constraints compared to the original formulation of $\Pi ^{U}$ in Section %
\ref{sec:mod}. These constraints stem from the fact that under uniform
pricing each segment receives the uniform price $p$, and since $%
q_{k}=1-F_{k}(p)$ we must have that $F^{-1}(1-q_{k})=F^{-1}(1-q_{j})$ for
all segments $k,j\geq 1$. At this point, the natural approach would be to
lower bound each $\widehat{\prof}_{k}$ by a triangle-shaped function---similar to
Figure \ref{fig:density1} \textbf{(a)}, but in the quantile space. We would then solve
the resulting optimization problem and, hopefully, obtain a good
approximation guarantee. 

Unfortunately, for regular distributions, in general,
the former approach fails.
Consider the case of triangular instances in
quantile space. These are instances for which the profit functions in the
quantile space are triangle-shaped---they have corresponding distributions that are regular. They are widely used in the literature
of approximate mechanism design as a bridge to provide good profit
guarantees. However, in our setting they can perform arbitrarily poorly. 

\begin{proposition}[\textbf{Triangular instances and failure of regular distributions}]\label{prop:triangle}{\ \\}
For triangular instances defined by 
\begin{equation*}
F_k(p) = \twopartdef{1,}{p\geq v_k;}{\frac{p\cdot(1-q_k)}{p\cdot(1-q_k)+v_k\cdot q_k},}{p<v_k,}
\end{equation*}
there exists a choice of $\{\alpha_k\}_{k=1}^K\in(0,1)^K$, $\{v_k\}_{k=1}^K\in\R^K_+$ and $\{q_k\}_{k=1}^K\in(0,1)^K$ that
delivers an
arbitrarily small \fcadd{profit ratio}  as the number of segments increases.
\end{proposition}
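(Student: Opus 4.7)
The plan is to exhibit an explicit family of triangular instances, parameterized by the number of segments $K$, whose profit ratio $\Pi^{U}/\Pi^{\star}$ is of order $1/K$ and therefore tends to zero. For each $K$, I would choose
\[
\alpha_k \;=\; \frac{2^{-k}}{1-2^{-K}},\qquad v_k \;=\; 2^k,\qquad q_k \;=\; \tfrac{1}{2},\qquad k=1,\dots,K,
\]
so that $r_k := v_k q_k = 2^{k-1}$ (taking $c=0$ as in the paper's convention). This construction is a discrete-segment analogue of the classical ``scaled equal-revenue'' bad examples used in mechanism design to exhibit a logarithmic gap between anonymous reserves and per-segment reserves.

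The first step is to compute $\Pi^{\star}$ directly. Since each summand $\alpha_k r_k = 2^{-k}\cdot 2^{k-1}/(1-2^{-K}) = 1/(2(1-2^{-K}))$ is independent of $k$, summing gives $\Pi^{\star} = K/(2(1-2^{-K})) = \Theta(K)$. The second step is to bound $\Pi^{U}$ uniformly by an absolute constant. For this I would use two elementary facts about $\prof_k(p) = p(1-F_k(p))$: first, $\prof_k(p) \leq p$ always (since $1-F_k(p)\leq 1$); and second, $\prof_k(p) = 0$ whenever $p > v_k$. Together these yield, for any $p$,
\[
\Pi^{U}(p) \;=\; \sum_{k=1}^K \alpha_k \prof_k(p) \;\leq\; p \sum_{k:\,v_k\ge p} \alpha_k.
\]
For arbitrary $p$, letting $j$ be the unique integer with $p\in (2^{j-1},2^j]$, so that $\{k : v_k \geq p\} = \{j,\dots,K\}$, a direct geometric-sum estimate gives
\[
p \sum_{k=j}^K \alpha_k \;\leq\; 2^j\cdot\frac{2^{1-j}}{1-2^{-K}} \;=\; \frac{2}{1-2^{-K}}\;\leq\; 4.
\]
Taking the supremum over $p$ yields $\Pi^{U}\leq 4$, uniformly in $K$.

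Combining the two estimates gives $\Pi^{U}/\Pi^{\star}\leq 4/K \to 0$ as $K\to\infty$, establishing the proposition. The main obstacle is essentially bookkeeping: both inequalities on $\prof_k$ are elementary and the geometric sum is immediate. The conceptual point is that by placing weights $\alpha_k\propto 2^{-k}$ on the geometrically spaced supports $v_k=2^k$, one ensures that at any uniform price $p$ the total mass of active segments multiplied by $p$ is only $O(1)$; yet each segment's own peak revenue $r_k$ is a constant fraction of $v_k$, so $\Pi^{\star}$ accumulates $\Theta(K)$ when summed across segments. This separation between the ``scale'' $v_k$ that a uniform price can exploit and the aggregate of $r_k$ that third-degree discrimination captures is precisely what drives the unbounded gap.
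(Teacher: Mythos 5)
Your proof is correct, and while it follows the same high-level strategy as the paper (exhibit an explicit family of triangular instances and show the ratio vanishes), the construction itself is genuinely different and in fact quantitatively stronger. The paper takes harmonically spaced values $v_k=1/(K-k+1)$ with \emph{equal} weights $\alpha_k=1/K$, so that $\Pi^\star$ is a harmonic sum of order $\log(K)/K$ while $\Pi^U$ is computed explicitly at the candidate prices $v_1,\dots,v_K$ and shown to be of order $1/K$; this yields a ratio of order $1/\log(K)$. You instead use geometrically spaced values $v_k=2^k$ with geometric weights $\alpha_k\propto 2^{-k}$ chosen so that every segment contributes the \emph{same} amount $\alpha_k v_k q_k$ to $\Pi^\star$ (the canonical equal-revenue design), which makes $\Pi^\star=\Theta(K)$ immediate and lets you bound $\Pi^U$ by an absolute constant via a one-line geometric sum rather than an exact evaluation. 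The payoff is a ratio of order $1/K$, which matches (up to a constant) the universal lower bound $\Pi^U/\Pi^\star\geq 1/K$ from Section 5 of the paper, and thus shows that regular (triangular) instances are already essentially worst-case --- a sharper conclusion than the paper's $O(1/\log K)$. Two pieces of bookkeeping are worth tightening but are harmless: the supremum $r_k=v_kq_k$ is not attained (one should price at $v_k-\eps$ and let $\eps\to 0$, exactly as the paper does), and your indexing $\{k: v_k\geq p\}=\{j,\dots,K\}$ needs the trivial separate case $p\leq 1$, where the bound $\Pi^U(p)\leq p\leq 1$ holds directly.
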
%
\begin{proof}
Let us start by considering triangular instances. The profit functions are
\begin{equation*}
\prof_k(p)=p \cdot (1-F_k(p) )= \twopartdef{0,}{v\geq v_k;}{\frac{p\cdot v_k\cdot q_k}{p\cdot(1-q_k)+v_k\cdot q_k},}{v<v_k.}
\end{equation*}
Note that $\prof_k(p)$ is  increasing and concave up to $v_k$ and then is constant and equal to zero for $p\geq v_k$. For each curve the optimal price is $v_k$ (minus small $\eps>0$), and thus 
\begin{equation}
\Pi^\star= \sum_{k=1}^K\alpha_k \cdot v_k\cdot q_k.
\end{equation}

For the uniform price, the optimal price must be achieved at one of the $v_1,\dots,v_K$. Therefore, 
\begin{equation*}
\Pi^{U}= \max_{i\in\{1,\dots,K\}}\Big\{\sum_{k=i}^K \alpha_k\cdot \frac{v_i\cdot v_k\cdot q_k}{v_i \cdot(1-q_k)+v_k\cdot q_k}\Big\}.
\end{equation*}
Next we establish that $\Pi^{U}/\Pi^\star\rightarrow 0$ as $K\rightarrow \infty.$ Consider the instance 
\begin{equation*}
v_k= \frac{1}{(K-k+1)}, \quad q_k=0.5,\quad \text{and}\quad  \q_k=\frac{1}{K}, \quad \forall k\in\{1,\dots,K\}.
\end{equation*}
Hence, 
\begin{equation*}
\sum_{k=i}^K \alpha_k\cdot \frac{v_i\cdot v_k\cdot q_k}{v_i \cdot(1-q_k)+v_k\cdot q_k} =
\frac{1}{K}\sum_{k=i}^K  \frac{\frac{1}{(K-i+1)}\cdot \frac{1}{(K-k+1)}}{\frac{1}{(K-i+1)}+\frac{1}{(K-k+1)}} =
\frac{1}{K}\sum_{k=i}^K \frac{1}{2(K+1)-(k+i)}.
\end{equation*}
The last term above is decreasing in $i$, and therefore
\begin{equation*}
\Pi^{U}=\frac{1}{K}\sum_{k=1}^K \frac{1}{2K+1-k}=\frac{1}{K}\sum_{k=1}^K \frac{1}{K+k}\approx
\frac{1}{K} \int_{1}^K\frac{1}{K+x}dx=\frac{1}{K}\log\Big(\frac{2K}{K+1}\Big).
\end{equation*}
We also have that 
\begin{equation*}
\Pi^\star= \sum_{k=1}^K\alpha_k \cdot v_k\cdot q_k= \frac{1}{2K}\sum_{k=1}^K  \frac{1}{(K-k+1)}
= \frac{1}{2K}\sum_{k=1}^K  \frac{1}{k}\approx \frac{1}{2K}\log(K).
\end{equation*}
Thus, 
\begin{equation*}
\frac{\Pi^{U}}{\Pi^\star}\approx \frac{\frac{1}{K}\log\Big(\frac{2K}{K+1}\Big)}{\frac{1}{2K}\log(K)}
=2\frac{\log\Big(\frac{2K}{K+1}\Big)}{\log(K)}\rightarrow 0 \quad (\approx 2\cdot\log(2)/\log(K)).
\end{equation*}
\end{proof}

Given that for general regular distributions we may obtain arbitrarily poor guarantees, we next investigate if such results can be improved upon by considering a commonly used sub-family of distributions with more structure.  We consider distributions with monotone (non-increasing) inverse hazard rate $(1-F_k(p))/f_k(p)$. Note that triangular distributions do not belong to this family as they an have increasing hazard rate. 
Interestingly, 
even  distributions with monotone hazard rate 
can deliver only an arbitrarily small profit guarantee.

\begin{proposition}[\textbf{Failure of monotone hazard rate distributions}]\label{prop:fail-regu}{\ \\} There exist distributions with monotone inverse hazard rate and common and bounded support for which the optimal uniform price delivers an arbitrarily small \fcadd{profit ratio} as the number of segments increases.
\end{proposition}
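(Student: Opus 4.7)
The plan is to construct a family of MHR distributions on a common bounded support whose profit curves mimic the exponential profit $p\,e^{-\lambda_k p}$ on the relevant scale. Exponentials have constant (hence non-increasing) inverse hazard rate, but unbounded support, and a naive truncation destroys MHR because the residual mass near the right endpoint pushes the inverse hazard rate upward. The fix I will use is to \emph{tilt} the exponential survival function by the linear factor $(M-p)/M$. This adds exactly $1/(M-p)$ to the hazard rate, making it strictly increasing on $[0,M)$, while leaving the profit function essentially unchanged on the scale $p \sim 1/\lambda_k$ where the per-segment optimum sits.

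Concretely, I will fix $M \ge 2$ and, for $k = 1, \ldots, K$, set $\alpha_k = 1/K$ and
\begin{equation*}
1 - F_k(p) \;\triangleq\; e^{-k p}\,\frac{M-p}{M}, \qquad p \in [0,M].
\end{equation*}
A short calculation gives density $f_k(p) = e^{-kp}[k(M-p)+1]/M$ and inverse hazard rate $(1-F_k(p))/f_k(p) = (M-p)/[k(M-p)+1]$, whose derivative in $p$ equals $-1/[k(M-p)+1]^2 < 0$. Hence each $F_k$ is a well-defined, strictly MHR distribution on the common support $[0,M]$.

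The next step is to bound both profits. For $\Pi^\star$, since $\prof_k(p) = p\,e^{-kp}(M-p)/M$, evaluating at the convenient point $p = 1/k$ (not necessarily the maximizer) gives $\prof_k(p_k^\star) \ge \tfrac{1}{ek}\bigl(1 - \tfrac{1}{kM}\bigr) \ge \tfrac{1}{2ek}$ whenever $M \ge 2$, and summing yields
\begin{equation*}
\Pi^\star \;\ge\; \frac{1}{K}\sum_{k=1}^K \frac{1}{2ek} \;\ge\; \frac{\log K}{2eK}.
\end{equation*}
For $\Pi^U$, I use $(M-p)/M \le 1$, the geometric sum bound $\sum_{k=1}^K e^{-kp} \le 1/(e^p-1)$, and the elementary inequality $p/(e^p-1) \le 1$ for $p \ge 0$, to obtain for every admissible $p$
\begin{equation*}
\Pi^U(p) \;=\; \frac{p(M-p)}{KM}\sum_{k=1}^K e^{-kp} \;\le\; \frac{1}{K}\cdot \frac{p}{e^p-1} \;\le\; \frac{1}{K}.
\end{equation*}
Dividing the two bounds, the profit ratio is at most $2e/\log K$, which tends to zero as $K \to \infty$, proving the proposition.

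The main obstacle is precisely the tension between bounded support and MHR: truncated exponentials violate MHR, while the most common MHR distributions on a bounded interval (uniform, Beta with nice parameters, truncated normal) have concave profit functions and therefore fall into the regime of \cref{thm1}, where a $1/2$-approximation is guaranteed. The linear tilt $(M-p)/M$ is engineered to add just enough curvature to the hazard rate to enforce MHR without flattening the per-segment profit spikes that drive the $1/K$ versus $(\log K)/K$ gap. Once the construction is in hand, the remaining arithmetic is routine, relying only on $e^p \ge 1+p$ and $1 - 1/(kM) \ge 1/2$.
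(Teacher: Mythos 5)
Your proof is correct and takes essentially the same route as the paper: both use a family of exponential-type distributions with rates spanning $1,\dots,K$ and weights $\alpha_k=1/K$, lower-bound $\Pi^\star$ by evaluating each profit near $p=1/\lambda_k$ to get order $\log K/K$, and upper-bound $\Pi^U$ via $\sum_{k}e^{-kp}\le 1/(e^p-1)$ and $e^p\ge 1+p$ to get $1/K$; the only difference is that you obtain bounded support by tilting the survival function with the factor $(M-p)/M$ whereas the paper truncates and renormalizes. One factual correction to your motivating remark: truncation does \emph{not} destroy the MHR property --- for the truncated exponential one computes $(1-F(p))/f(p)=\tfrac{1}{\lambda}\bigl(1-e^{-\lambda(L-p)}\bigr)$, which is decreasing in $p$, and indeed the paper's own proof uses exactly these truncated exponentials --- though this does not affect the validity of your construction.
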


\begin{proof} We construct regular distributions $\{F_k\}_{k=1}^K$ such that $\Pi^{U}/\Pi^\star\rightarrow 0$ as $K\uparrow \infty$. For $L>0$ large, define
\begin{equation*}
F_k(p)=\frac{1-e^{-(K-k+1)p}}{1-e^{-(K-k+1)L}} \quad \forall p\geq0,\quad\text{and}\quad \alpha_k=1/K\quad \forall k\in\{1,\dots,K\}.
\end{equation*}
Thus, we consider truncated exponential distributions with support in $[0,L]$ 
. Note that these distributions have monotone inverse hazard rate because 
\begin{equation*}
\frac{1-F_k(p)}{f_k(p)}=
\frac{1}{K-k+1} - \frac{1}{(K-k+1)}\cdot\frac{e^{-(K-k+1)L}}{e^{-(K-k+1)p}}
\end{equation*}
is non-increasing.
 For all $k\ge 1$, the profit functions are 
\begin{equation*}
\prof_k(p)=p\cdot (1-F_k(p))= p\cdot \frac{e^{-(K-k+1)p}-e^{-(K-k+1)L}}{1-e^{-(K-k+1)L}},
\end{equation*}
whereas the per-segment optimal prices satisfy 
\begin{equation*}
p_k^\star = \frac{1-e^{-(K-k+1)(L-p_k^\star)}}{K-k+1},\quad \text{and}\quad \prof_k(p_k^\star) =  \frac{1-e^{-(K-k+1)(L-p_k^\star)}}{K-k+1}\cdot \frac{e^{-(K-k+1)p_k^\star}-e^{-(K-k+1)L}}{1-e^{-(K-k+1)L}}.
\end{equation*}
First, notice that $p_k^\star \leq 1/(K-k+1)$. Then the third-degree price discrimination profit can be bounded as follows:
\begin{flalign*}
\Pi^\star &=\sum_{k=1}^K\alpha_k\prof_k(p_k^\star)\\
&=\sum_{k=1}^K\frac{1}{K}\frac{e^{(K-k+1)p_k^\star}\left(e^{-(K-k+1)p_k^\star}-e^{-(K-k+1)L}\right)^2}{(K-k+1)(1-e^{-(K-k+1)L})}\\
&\stackrel{(a)}{\geq}\sum_{k=1}^K\frac{1}{K}\frac{e^{1}\left(e^{-1}-e^{-(K-k+1)L}\right)^2}{K-k+1}\\
&\stackrel{(b)}{\geq}
\frac{e\left(e^{-1}-e^{-L}\right)^2}{K}\sum_{k=1}^K\frac{1}{K-k+1}\\
&\approx \frac{e\left(e^{-1}-e^{-L}\right)^2}{K}\cdot \log(K),
\end{flalign*}
where in $(a)$ we used the fact that the function $e^{\lambda x}\left(e^{-\lambda x}-e^{\lambda L}\right)^2$ is decreasing for $x\in[0,L]$ and that $p_k^\star<L$. In $(b)$ we used $L$ large enough such that $L>1$.
The uniform price profit for some price $p$ is 
\begin{flalign*}
\sum_{k=1}^K\alpha_k\prof_k(p)&=\frac{1}{K}\sum_{k=1}^K p\cdot \frac{e^{-(K-k+1)p}-e^{-(K-k+1)L}}{1-e^{-(K-k+1)L}}\\
&\leq\frac{p}{K}\sum_{k=1}^K  e^{-(K-k+1)p}\\
&=\frac{p}{K}\sum_{k=1}^K  e^{-kp}\\
&=\frac{p}{K}\cdot\frac{1-e^{-Kp}}{e^p-1}\\
&\leq\frac{1-e^{-Kp}}{K}\\
&\leq \frac{1}{K},
\end{flalign*}
where the second to last inequality holds because we always have that $p+1\leq e^p$.
With this, we can conclude that 
\begin{equation*}
\frac{\Pi^{U}}{\Pi^\star} = \frac{\max_{p\geq 0}\Big\{\sum_{k=1}^K\alpha_k\prof_k(p)\Big\}}{\Pi^\star}
\leq \frac{\frac{1}{K}}{\frac{e\left(e^{-1}-e^{-L}\right)^2}{K}\cdot \log(K)}= \frac{1}{e\left(e^{-1}-e^{-L}\right)^2\cdot \log(K)}\rightarrow 0,\quad K\uparrow\infty.
\end{equation*}
\end{proof}Proposition \ref{prop:fail-regu} establishes that for some
monotone hazard rate distributions, uniform pricing can perform arbitrarily poorly
compared to optimal third-degree price discrimination.
The intuition behind this
result is similar to that of Proposition \ref{prop:no-com-sup}. We consider
exponential distributions such that at the optimal uniform price, most of
the per-segment profits will be low, and therefore they will not contribute
much to $\Pi ^{U}$, see Figure \ref{fig:no-common-support-2}. Since for
exponentials, the associated profit functions decay quickly after they peak,
they behave in a similar manner as the case of non-common support
distributions where the upper end of the support is increasing. Indeed, in
the proof of Proposition \ref{prop:fail-regu} we obtain a profit
guarantee similar to that found in the proof of Proposition \ref{prop:no-com-sup}, namely, $%
O(1/\log (K))$. Finally, we note that to prove the proposition,
we use truncated exponential distributions. In turn, we only relax the concavity 
of the profit functions, but we keep the common and finite support assumptions intact.

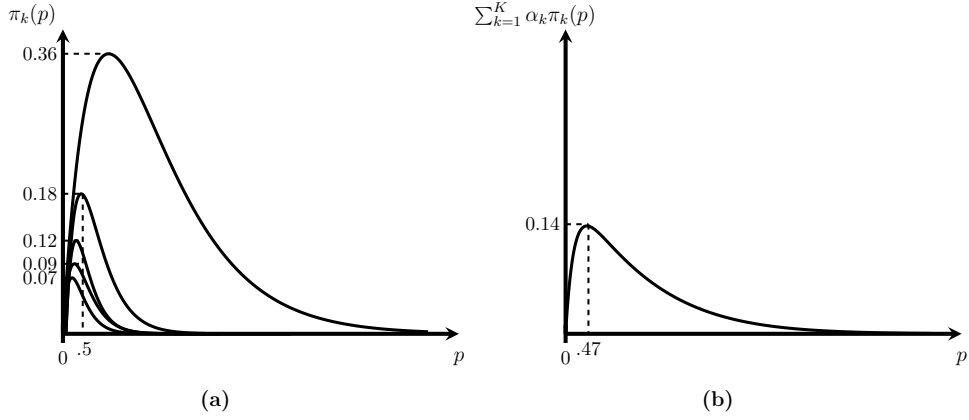
\begin{figure}[h]
\centering
\scalebox{0.675}{\begin{tikzpicture}[baseline=0pt,scale=0.6]
\def\n{9}
\def\dl{16.5}

\draw [->,black,line width=0.8mm] (0,0) -- (13,0);
\draw [->,black,line width=0.8mm] (0,0) -- (0,10);
\draw node at (13,-0.8) {\large $p$};
\draw [-,line width=0.8mm] (0,0.3) -- (0,-0.3) node at (0,-0.8) { $\unp$};
\draw [-,line width=0.8mm]  node at (-1.0,10.5) {\large $\prof_{k}(p)$};

\draw[dashed,line width=0.4mm](0,9.2)--(1.75,9.2) node at (-0.75,9.2) {0.36};
\draw[dashed,line width=0.4mm](0,4.6)--(0.7,4.6) node at (-0.75,4.6) {0.18};
\draw[dashed,line width=0.4mm](0,3.06)--(0.7,3.06) node at (-0.75,3.06) {0.12};
\draw[dashed,line width=0.4mm](0,2.3)--(0.6,2.3) node at (-0.75,2.3) {0.09};
\draw[dashed,line width=0.4mm](0,1.83)--(0.2,1.83) node at (-0.75,1.83) {0.07};

\draw[dashed,line width=0.4mm](0.65,4.6)--(0.65,0) node at (0.7,-0.5) {.5};

\draw[black,line width=0.6mm,shift={(0,0)}] plot file {regular-1.data};
\draw[black,line width=0.6mm,shift={(0,0)}] plot file {regular-2.data};
\draw[black,line width=0.6mm,shift={(0.1,0)}] plot file {regular-3.data};
\draw[black,line width=0.6mm,shift={(0.1,0)}] plot file {regular-4.data};
\draw[black,line width=0.6mm,shift={(0,0)}] plot file {regular-5.data};

\draw [->,black,line width=0.8mm] (0+\dl,0) -- (13+\dl,0);
\draw [->,black,line width=0.8mm] (0+\dl,0) -- (0+\dl,10);
\draw node at (13+\dl,-0.8) {\large $p$};
\draw [-,line width=0.8mm] (0+\dl,0.3) -- (0+\dl,-0.3) node at (0+\dl,-0.8) { $\unp$};
\draw [-,line width=0.8mm]  node at (-1.0+\dl,10.5) {\large $\sum_{k=1}^K \alpha_k \prof_k(p)$};

\draw[black,line width=0.6mm,shift={(0+\dl,0)}] plot file {regular-T.data};
\draw[dashed,line width=0.4mm](0+\dl,3.6)--(0.75+\dl,3.6) node at (-0.75+\dl,3.6) {0.14};
\draw[dashed,line width=0.4mm](0.75+\dl,3.6)--(0.75+\dl,0) node at (0.75+\dl,-0.5) {.47};

\node at (5,-2.2){\large \textbf{(a)}};
\node at (5+\dl,-2.2){\large \textbf{(b)}};
\end{tikzpicture}}
\caption{ Example for the construction of profit functions from regular
distributions in Proposition \protect\ref{prop:fail-regu} with $K=5$. In 
\textbf{(a)} we illustrate the profit functions for each segment where the
per-segment optimal profit is $e^{-1}/k$ and the per-segment optimal price $%
1/k$. In \textbf{(b)} we show $\sum_{k=1}^{K}\protect\alpha _{k}\prof_{k}(p)$
with maximum value $0.14$ (bounded above by $1/K$).}
\label{fig:no-common-support-2}
\end{figure}

\section{Worst-Case Performance}\label{sec:wcp}

In this brief section, our objective is to investigate \fcadd{how the worst case
performance of uniform pricing depends on the number of segments}. For this purpose, suppose there are $K$
segments and without loss of generality let us assume that 
\begin{equation}
0\leq \alpha _{1}\prof_{1}(p_{1}^{\star })\leq \alpha _{2}\prof_{2}(p_{2}^{\star
})\leq \cdots \leq \alpha _{K}\prof_{K}(p_{K}^{\star }).
\label{eq:order-reven-w-c}
\end{equation}%
Using this condition we can verify that the ratio $\Pi ^{U}/\Pi^\star$ is
always bounded below by $1/K$. Indeed, note that $\Pi ^{U}\geq \Pi
^{U}(p_{K}^{\star })$ and 
\begin{equation*}
\Pi ^{U}(p_{K}^{\star })=\sum_{k=1}^{K}\q_{k}\prof_{k}(p_{K}^{\star })\geq \q%
_{K}\prof_{K}(p_{K}^{\star })=\frac{\overbrace{\q_{K}\prof_{K}(p_{K}^{\star
})+\cdots +\q_{K}\prof_{K}(p_{K}^{\star })}^{K\text{ times}}}{K}\geq \frac{1}{K}%
\sum_{k=1}^{K}\alpha _{k}\prof_{k}(p_{k}^{\star }),
\end{equation*}%
where in the last inequality  we use \eqref{eq:order-reven-w-c}. 
This proves that the worst-case performance of uniform pricing with respect
to third-degree price discrimination is  $1/K$. In what follows,
we argue that this lower bound performance can indeed be achieved.\footnote{%
Proposition 2 in \citet{malueg2006bounding} provides an inductive argument for this result
with linear demand functions. Here we provide a constructive argument with
atomic distributions.}

\fcadd{We now suppose that the demand in every segment $k$ is described by a Dirac distribution with an atom at value $v_k$. 
We denote by $\mathcal{D}$ the set of the segmentations and per segment demand functions that are generated by Dirac distributions, thus $\{\alpha_k,v_k\}_{k=1}^K\in\mathcal{D}$.
}

 Under perfect price discrimination, the
monopolist can charge the price $p_{k} = v_k$ to the buyer and extract full
surplus: $\sum_{k=1}^{K}\q_{k}v_{k}$. Under uniform pricing, the monopolist
charges a fixed price $p$ across all segments and collects profit only from
those segments whose value, $v_{k}$, is larger than $p$: $\sum_{k=1}^{K}\q%
_{k}p\1_{v_{k}\geq p}$. In turn, the optimization problem we would like to
solve to assess the performance of uniform pricing becomes 
\begin{equation*}
\inf_{\alpha _{k},v_{k}}\left\{ \frac{\max_{p\geq 0}\sum_{k=1}^{K}\q_{k}p\1%
_{v_{k}\geq p}}{\sum_{k=1}^{K}\q_{k}v_{k}},\quad \text{s.t.}\quad
\sum_{k=1}^{K}\alpha _{k}=1,\:\:\alpha _{k}\geq 0\:\:\forall k\right\} .
\end{equation*}%
Observe that in the above problem, without loss of generality, we can assume
that the values $v_{k}$ are ordered. This allows us to simplify the
numerator in the objective above. Note that the maximum in the numerator
must be achieved at some price $p_{j}=v_j$ for $j\in \{1,\dots ,K\}$ and for any 
$p_{j}$ we have 
\begin{equation*}
\sum_{k=1}^{K}\q_{k}p_{j}\1_{p_{k}\geq p_{j}}=p_{j}\sum_{k=j}^{K}\alpha _{k}.
\end{equation*}%
In turn, this enables us to reformulate the problem as 
\begin{equation}
\inf_{\alpha _{k},p_k}\left\{ \frac{\max_{j\in \{1,\dots
,K\}}p_{j}\sum_{k=j}^{K}\alpha _{k}}{\sum_{k=1}^{K}\q_{k}p_{k}},\quad \text{%
s.t.}\quad p_{1}\leq \cdots \leq p_{K},\quad \sum_{k=1}^{K}\alpha
_{k}=1,\:\:\alpha _{k}\geq 0\:\:\forall k\right\} .  \label{eq:reform-w-c}
\end{equation}%
Next, we exhibit values of $p_{k}$ and $\alpha _{k}$ such that the ratio in
the problem above is arbitrarily close to the lower bound $1/K$. Let $\eps>0$
be small, and define the prices 
\begin{equation*}
p_{k}=\frac{\eps}{K}\left( \frac{1+\eps}{\eps}\right) ^{k},\quad k\in
\{1,\dots ,K-1\},\quad \text{and }\quad p_{K}=\frac{1}{K}\left( \frac{1+\eps%
}{\eps}\right) ^{K-1},
\end{equation*}%
and let the per-segment proportions be 
\begin{equation*}
\alpha _{k}=\frac{1}{K}\frac{1}{p_{k}},\quad k\in \{1,\dots ,K\}.
\end{equation*}%
Next, we verify that the above prices and proportions are feasible. Indeed,
it is easy to verify that $p_{k}$ is increasing in $k$, while for the
per-segment proportions we have that $\alpha _{k}>0$ and 
\begin{flalign*}
\sum_{k=1}^K\alpha_k&= \sum_{k=1}^{K-1}\frac{1}{\eps}\left(\frac{\eps}{1+\eps}\right)^k + \left(\frac{\eps}{1+\eps}\right)^{K-1}=1.
\end{flalign*}Now let us look at the objective in problem %
\eqref{eq:reform-w-c}. Given our choice of prices and proportions, we have
that $\alpha _{k}p_{k}=1/K$, and therefore the denominator in %
\eqref{eq:reform-w-c} equals 1. For the numerator consider $j\in \{1,\dots
,K-1\}$. Then 
\begin{flalign*}
p_j\sum_{k=j}^K\alpha_k &= \frac{\eps}{K}\left(\frac{1+\eps}{\eps}\right)^j
\left(
 \sum_{k=j}^{K-1}\frac{1}{\eps}\left(\frac{\eps}{1+\eps}\right)^k + \left(\frac{\eps}{1+\eps}\right)^{K-1}
\right)\\
&= \frac{\eps}{K}\left(\frac{1+\eps}{\eps}\right)^j
\left(
 \frac{\eps}{1+\eps}
\right)^{j-1}\\
&=\frac{1+\eps}{K},
\end{flalign*}and $p_{K}\alpha _{K}=1/K$. Therefore the objective in %
\eqref{eq:reform-w-c} evaluated at our current choice of prices and
proportions equals 
\begin{equation*}
\max_{j\in \{1,\dots ,K\}}\left\{ p_{j}\sum_{k=j}^{K}\alpha _{k}\right\}
=\max \left\{ \frac{1}{K},\frac{1+\eps}{K}\right\} =\frac{1+\eps}{K}.
\end{equation*}%
In conclusion, we have exhibited an instance for which the performance of
the optimal uniform price is arbitrarily close  to the worst
performance guarantee $1/K$. We summarize this result in the following
proposition. 
\begin{proposition}[\textbf{Worst performance achieved}]{\ \\}
\fcadd{The worst case profit ratio in the class of Dirac demand functions is given by}
\begin{equation*}
\inf_{\{\alpha_k,v_k\}_{k=1}^K\in \mathcal{D}}\frac{\Pi^{U}}{\Pi^\star}
= \frac{1}{K}.
\end{equation*}
\end{proposition}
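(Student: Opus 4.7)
The plan is to split the proof into the standard two directions: first show that $\Pi^U/\Pi^\star \geq 1/K$ always holds in this class (indeed, in any class), and then exhibit a sequence of Dirac instances whose profit ratio approaches $1/K$.

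For the lower bound, I would re-order the segments so that $\alpha_1 v_1 \leq \alpha_2 v_2 \leq \cdots \leq \alpha_K v_K$ (since under Dirac at $v_k$, the per-segment optimum is $\prof_k(p_k^\star)=v_k$). Taking the candidate uniform price $p=v_K$, only the top segment is guaranteed to purchase, but this already gives $\Pi^U \geq \alpha_K v_K \geq \tfrac1K \sum_{k} \alpha_k v_k = \Pi^\star/K$, exactly as in the display preceding the statement.

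For tightness, I would reformulate the Dirac problem as in \eqref{eq:reform-w-c}: after ordering prices $p_1 \leq \cdots \leq p_K$, the uniform-pricing maximum reduces to $\max_j p_j \sum_{k\geq j}\alpha_k$, while the denominator is $\sum_k \alpha_k p_k$. The construction strategy is to choose $(\alpha_k,p_k)$ so that (i) each product $\alpha_k p_k$ is identical (equal to $1/K$), which pins the denominator at $1$, and (ii) each partial sum $p_j \sum_{k \geq j}\alpha_k$ is essentially the same constant, so the maximum over $j$ cannot exploit any imbalance. This motivates the geometric ansatz $p_k = \frac{\epsilon}{K}\bigl(\tfrac{1+\epsilon}{\epsilon}\bigr)^k$ for $k<K$ with a modified last price, and $\alpha_k = 1/(K p_k)$, as already listed in the excerpt.

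Then the verification splits into three routine checks: feasibility (monotonicity of $p_k$, positivity of $\alpha_k$, and $\sum_k \alpha_k = 1$ via the geometric series identity $\sum_{k=1}^{K-1}\tfrac{1}{\epsilon}\bigl(\tfrac{\epsilon}{1+\epsilon}\bigr)^k + \bigl(\tfrac{\epsilon}{1+\epsilon}\bigr)^{K-1} = 1$), denominator ($\sum_k \alpha_k p_k = K \cdot \tfrac{1}{K} = 1$), and numerator (for every $j<K$ the telescoping inside the sum yields $p_j\sum_{k\geq j}\alpha_k = (1+\epsilon)/K$, and $p_K \alpha_K = 1/K$). Letting $\epsilon \downarrow 0$ gives the ratio $(1+\epsilon)/K \to 1/K$, matching the lower bound.

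The main conceptual obstacle is guessing the right geometric scaling; once the ansatz $\alpha_k p_k = 1/K$ is imposed, the problem of choosing $p_k$ so that $p_j \sum_{k \geq j}\alpha_k$ stays essentially constant in $j$ becomes a one-parameter question solved by the geometric ratio $(1+\epsilon)/\epsilon$. Everything else is a straightforward algebraic check, so I do not expect any technical difficulty beyond bookkeeping.
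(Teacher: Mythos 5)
Your proposal is correct and follows essentially the same route as the paper: the same $1/K$ lower bound via ordering $\alpha_k v_k$ and pricing at the top segment's value, and the same geometric construction $p_k = \tfrac{\eps}{K}\bigl(\tfrac{1+\eps}{\eps}\bigr)^k$, $\alpha_k = 1/(Kp_k)$ for tightness, with the identical telescoping verification. No gaps.
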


\section{Connection to Sequential Screening}\label{revision-seq-screening}

The result in Theorem \ref{thm1} is intimately related to the
problem of ex-post individually rational sequential screening, in which the seller must
optimally design a menu of contracts that incentivize buyers of different
types to self-select, see e.g., \citet{krst15} and \citet{bergemann2017scope}.
The connection between the two settings comes from considering the types in sequential screening as the segments of our paper and observing that optimal static pricing in the screening setting is the same as our
optimal uniform pricing. Additionally,  the optimal screening profit is bounded above by the optimal perfect price discrimination profit. As we show next, this allows us to prove a half approximation result for the sequential screening setting.

In what follows we formally introduce the sequential screening setting and then establish the connection with our third-degree price discrimination setting. There is a seller selling one unit of an object at zero cost to a buyer with an outside option of zero. The buyer is of type 
$k\in \{1,\dots ,K\}$, with probability $\alpha _{k}$, $\alpha _{k}>0$ and $
\sum_{k=1}^{K}\alpha _{k}=1$.
Both parties are
risk-neutral and have quasilinear utility functions. There are two periods. In the first period, the buyer privately learns her type $k$---which  provides information about her true willingness-to-pay for the object---and then the parties contract. The contract specifies allocation and payment as a function of reported
interim type and ex-post value. In the second period, the buyer privately
learns her value $\theta$---drawn from a distribution function $F_{k}(\theta)$ with density function $f_{k}(\theta)$---and allocations and transfers are realized. 

We consider direct revelation mechanisms, with
allocations $x_{k}:\Theta \rightarrow \lbrack 0,1]$ and transfers $%
t_{k}:\Theta \rightarrow \mathbb{R}$, that depend on reported interim type $%
k^{\prime }$ and ex-post value $\theta^{\prime }$. Then  the ex-post utility of
 a buyer who reported $k$ in the
first period and $v ^{\prime }$ in the second period while her true
value is $v $ is given by:
\[
u_{k}(\theta ;\theta^{\prime })\triangleq v \cdot x_{k}(\theta
^{\prime })-t_{k}(\theta^{\prime }).
\]%
Similarly, the interim expected utility of a buyer whose true
interim type is $k$, but reported to the mechanism $k^{\prime}$\label{pg:dou-dev} and is truthful in the second period, is given by: 
\[
U_{kk^{\prime }}\triangleq \int_{\Theta }u_{k^{\prime }}(z;z)\cdot f_{k}(z)dz.
\]

There are two kinds of incentive compatibility constraints that must be
satisfied. The first is the ex-post incentive compatibility
constraint which requires that for any report in the first
period, truth-telling is optimal in the second period: 
\begin{equation*}
u_{k}(\theta,\theta )\geq u_{k}(\theta ;\theta ^{\prime })\quad \forall k\in
\{1,\dots ,K\},\forall \theta \in \Theta .  %
\end{equation*}%
The second is the interim incentive compatibility constraint 
which requires that truth-telling is optimal in the first period: 
\begin{equation*}
U_{kk}\geq U_{kk^{\prime }}\quad \forall k,k^{\prime }\in \{1,\dots ,K\}. 
\end{equation*}%
Finally, the key constraint in the sequential screening problem with ex-post participation constraints is that the buyer must be willing to participate after having learned her type and value:
\begin{equation*}
u_{k}(\theta )\geq 0,\quad \forall k\in \{1,\dots ,K\},\quad \forall \theta
\in \Theta.%
\end{equation*}%
The seller aims to maximize the expected transfers from a mechanism that satisfies both incentive compatibility constraints and the ex-post participation constraint. Lemma 1 in \citet{bergemann2017scope} implies the following reformulation of the seller's problem in which we only need to solve for the allocations and the utility of the lowest ex-post type:
\begin{flalign*}%
&\quad\Pi^{seq}\triangleq \max_{0\leq x_k\leq 1,u_k} \quad 
-\sum_{k=1}^{K}\q_ku_{k}+\sum_{k=1}^{K}\q_k\int_\Theta x_{k}(z)\left(z-\frac{1-F_k(z)}{f_k(z)}\right)f_{k}(z )dz\nonumber\\
&\qquad  \text{s.t}\quad x_{k}(\cdot) \quad \text{is non-decreasing},\quad \forall k\in\{1,\dots,K\}\nonumber\\
&\qquad \qquad u_{k}\geq 0,\quad \forall k\in\{1,\dots,K\}\nonumber\\
&\qquad \qquad u_{k}+ 
\int_\Theta x_{k}(z)(1-F_k(z))dz\geq u_{k'}+\int_\Theta x_{k'}(z)(1-F_k(z))dz,\quad
\forall k,k'\in\{1,\dots,K\}\nonumber.
\end{flalign*}
The first set of constraints comes from the ex-post incentive compatibility constraints in the original formulation. The second set of constraints ensures ex-post participation. The final constraints come from the interim incentive compatibility constraints. 

To see the connection between the setting above and our third-degree price discrimination setting, first consider $\Pi^{seq}$ without the interim incentive compatibility constraints. In this case the problem decouples across types, and for each type it reduces to:
\begin{flalign*}
&\quad \max_{0\leq x\leq 1} \quad 
\int_\Theta x(z)\cdot \q_k\left(z-\frac{1-F_k(z)}{f_k(z)}\right)f_{k}(z)dz\nonumber\\
&\qquad  \text{s.t}\quad x(\cdot) \quad \text{is non-decreasing}. \nonumber
\end{flalign*}
Note that $u_k=0$ for all $k$ is optimal in all decoupled problems. 
It is well known (see e.g., \cite{rize83}) that the problem above has a bang-bang solution, say $p_k$. Therefore, 
\begin{equation}\label{eq:rev-seq-upp}
\Pi^{seq}\leq \sum_{k=1}^K \int_{p_k}^{\ovp} \q_k\left(z-\frac{1-F_k(z)}{f_k(z)}\right)f_{k}(z)dz = 
\sum_{k=1}^K \q_k \prof_k(p_k)\leq \sum_{k=1}^K \q_k \prof_k(p_k^\star) =\Pi^\star.
\end{equation}
That is, the optimal solution in the sequential screening problem is bounded above by the optimal third-degree price discrimination solution.

Additionally, we can consider $\Pi^{seq}$ in the case where the seller uses a static price---a price that is the same regardless of the buyer's interim type. In other words, we set $x_k(\cdot)=x(\cdot)$ for all $k$. In this case,  after setting $u_k=0$ for all $k$, the interim incentive compatibility constraints are directly satisfied, and the problem becomes:
\begin{flalign}
&\quad \Pi^{static}\triangleq \max_{0\leq x\leq 1} \quad 
\int_\Theta x(z)\cdot\Big(\sum_{k=1}^{K}\q_k\left(z-\frac{1-F_k(z)}{f_k(z)}\right)f_{k}(z )\Big)dz\nonumber\\
&\qquad  \text{s.t}\quad x(\cdot) \quad \text{non-decreasing}. \nonumber
\end{flalign}
As before, the optimal solution is bang-bang:
\begin{equation}\label{eq:rev-seq-low}
\Pi^{static} = \max_{p\in \Theta} \int_{p}^{\ovp} \sum_{k=1}^{K}\q_k\left(z-\frac{1-F_k(z)}{f_k(z)}\right)f_{k}(z )dz\ =\max_{p\in \Theta}\sum_{k=1}^{K}\q_k\prof_k(p) =\Pi^U.
\end{equation}
That is, the static optimal solution in the sequential screening problem coincides with the optimal uniform price solution. 

Under the conditions of Theorem \ref{thm1}, conditions \eqref{eq:rev-seq-upp} and \eqref{eq:rev-seq-low} imply that:
\begin{equation*}
\frac{\Pi^{static}}{\Pi^{seq}} =\frac{\Pi^{U}}{\Pi^{seq}} \geq \frac{\Pi^{U}}{\Pi^\star} \geq 1/2.
\end{equation*}
We have thus established the following corollary of Theorem \ref{thm1}. 
\begin{corollary}[\textbf{Half approximation in sequential screening}]{\ \\} Suppose that the assumptions of Theorem \ref{thm1} hold. Then in the ex-post individually rational screening setting of \citet{krst15} and \citet{bergemann2017scope}, the optimal static contract delivers a 1/2-approximation for the seller's profits. 
\end{corollary}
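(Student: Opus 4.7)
The plan is to obtain the corollary as an immediate consequence of \cref{thm1} by inserting two observations into the sequential-screening reformulation carried out just before the statement. Specifically, I would establish (i) $\Pi^{seq}\leq \Pi^\star$ and (ii) $\Pi^{static}=\Pi^{U}$; combining these with \cref{thm1} then yields
\[
\frac{\Pi^{static}}{\Pi^{seq}} \;=\; \frac{\Pi^{U}}{\Pi^{seq}} \;\geq\; \frac{\Pi^{U}}{\Pi^\star} \;\geq\; \frac{1}{2}.
\]

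For step (i), my strategy is to relax the seller's program stated after Lemma~1 of \citet{bergemann2017scope} by dropping the interim incentive-compatibility constraints that couple different reported types. The relaxation decouples across $k$ and reduces each subproblem to choosing a monotone allocation $x_k(\cdot)\in[0,1]$ that maximizes the expected virtual surplus. By the classical bang-bang argument (cf.\ \citet{rize83}), each relaxed subproblem is optimized by a cutoff price $p_k$ with $u_k=0$, and the integral from $p_k$ to $\ovp$ of the virtual value against $f_k$ equals $\q_k\prof_k(p_k)\leq \q_k\prof_k(p_k^\star)$. Summing over $k$ gives $\Pi^{seq}\leq \Pi^\star$.

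For step (ii), I would return to the same reformulated program but now impose the static restriction $x_k(\cdot)=x(\cdot)$ for all $k$. This restriction makes the interim IC constraints automatically satisfied and makes $u_k=0$ optimal. Aggregating the per-type virtual-value integrands collapses the objective to a single-variable problem whose bang-bang optimum is the cutoff solving $\max_{p\in\Theta}\sum_k \q_k \prof_k(p)$, which is exactly $\Pi^{U}$ by \eqref{rev:uniform-problem}.

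There is no serious obstacle: the bang-bang characterizations are standard, and the conceptual content of the corollary is merely the observation that the sequential-screening optimum is sandwiched between $\Pi^{U}$ (achieved by a feasible static contract) and $\Pi^\star$ (an upper bound obtained by relaxing interim IC). The only mildly delicate point is verifying that the interim IC constraints in the reformulated program can be safely discarded to produce the upper bound $\Pi^\star$, but since these constraints only restrict the feasible set, removing them can only increase the optimal value.
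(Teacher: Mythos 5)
Your proposal is correct and follows essentially the same route as the paper: the paper also drops the interim incentive-compatibility constraints to decouple the program and obtain $\Pi^{seq}\leq \Pi^\star$ via the bang-bang cutoff argument, and imposes $x_k(\cdot)=x(\cdot)$ to show $\Pi^{static}=\Pi^{U}$, then chains the inequalities with \cref{thm1}. No gaps.
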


\section{Conclusion}

We consider the profit performance of uniform pricing in settings where a
monopolist may engage in third-degree price discrimination. We establish
that, for concave profit functions with common support, using a single
price can achieve half of the optimal profit the monopolist could
potentially garner by engaging in third-degree price discrimination. Our
profit guarantee does not depend on the number of market
segments or prices the seller might use.  
The different arguments we provide for \cref{thm1} highlight that the monopolist  
can achieve the performance guarantee of 1/2 in different informational settings by using a simple uniform price.

We then investigate the scope of
our profit guarantee. We establish that by relaxing either the concavity or
the common support assumption, uniform pricing can lead to arbitrarily poor
profit guarantees as the number of market segments increases.
Interestingly, for regular distributions and triangular instances---leading
cases in the literature of approximate mechanism design---we show that
uniform pricing can again secure only very poor profit guarantees that scale with
the number of market segments.

Depending on the nature of a market, different types of price discrimination
are possible. A plausible direction for future work is to consider an
environment in which the seller can exercise second-degree price
discrimination by creating a menu of prices and quantities. For example,
consider a setting with $K$ different markets, each of which is characterized
by a different distribution of valuations. Within each market the seller may
offer an optimal menu of prices and quantities. However, due to legal or
business constraints
 such powerful
price discrimination may not be implementable. Instead, the seller might only be
able to offer the same menu across all markets. In turn, it becomes a
natural question to explore the performance of this limited menu versus the
full discriminating one. %

\bibliographystyle{econometrica}
\bibliography{general}

\newpage

\appendix

\end{document}